\newtheorem{theorem}{Theorem}
\newtheorem{lemma}[theorem]{Lemma}
\newtheorem{proposition}[theorem]{Proposition}
\definecolor{Gray}{gray}{0.9}
\definecolor{LightCyan}{rgb}{0.88,1,1}
\algnewcommand{\IIf}[1]{\State\algorithmicif\ #1\ \algorithmicthen}
\algnewcommand{\IElse}[2]{\State\algorithmicelse\ #2\ }
\algnewcommand{\EndIIf}{\unskip\ \algorithmicend\ \algorithmicif}
\begin{document}
\setlength{\parskip}{0em}

\title{Energy Harvesting Communications with Batteries having {Cycle} Constraints
	}
\author
{
\IEEEauthorblockN
{   Rajshekhar Vishweshwar Bhat,~\IEEEmembership{Graduate~Student~Member,~IEEE,}
    Mehul Motani,~\IEEEmembership{Fellow,~IEEE,}
    Chandra R Murthy,~\IEEEmembership{Senior~Member,~IEEE,}		
    and Rahul Vaze,~\IEEEmembership{Senior~Member,~IEEE}
}
}
\maketitle
\begin{abstract}
	Practical energy harvesting (EH) based communication systems typically use a  battery to temporarily store the harvested energy prior to its use for communication. 
{The batteries can be damaged when they are repeatedly charged (discharged) after being partially discharged (charged), overcharged or deeply discharged. 
	This motivates the \emph{cycle constraint} which says that a battery must be charged (discharged) only after it is sufficiently discharged (charged). 
	We also assume Bernoulli energy arrivals,  and a half-duplex constraint due to which the batteries are not charged and discharged simultaneously.} 
	In this context, we study EH communication systems with: (a) a \emph{single-battery}  with capacity $2B$ units and (b)  \emph{dual-batteries}, each having capacity of $B$ units. The aim is to obtain the best possible long-term average throughputs and throughput regions in point-to-point (P2P) channels  and multiple access channels (MAC), respectively. 
	For the P2P channel, we obtain an analytical optimal solution in the single-battery case, and propose optimal and suboptimal power allocation policies for the dual-battery case.
	We extend these policies to obtain achievable throughput regions in MACs by jointly allocating rates and powers. 
	From numerical simulations, we find that  the optimal throughput in the dual-battery case is significantly higher than that in the single-battery case, although the total storage capacity in both cases is $2B$ units.  
	Further, in the proposed policies, the largest throughput region in the single-battery case is contained within that of the dual-battery case. 
\end{abstract}
\IEEEpeerreviewmaketitle
\section{Introduction}
Energy harvesting (EH) from natural and man-made sources has been envisioned as a viable technique for powering emerging low-power and energy starved communication systems, including Internet of Things (IoT) devices in fifth generation (5G) networks \cite{5G-EH}. 
Consequently, EH communications has been widely studied in the last few years \cite{Review,BhatMotaniLim,chandraiisc,Rahul_online_TIT}. 
In most of the existing studies, the proposed power allocation  policies may require batteries to undergo repeated \emph{partial} charge and discharge cycles. 
In practice, such a charging and discharging pattern affects the usable capacity of batteries. For instance, the usable capacity of many Nickel based batteries \cite{memory-effect} and some Li-ion \cite{Sasaki2013} batteries reduces at a significant rate with the number of  charge/discharge cycles,  {especially if the battery is partially charged/discharged}. 
This phenomenon, referred to as the memory effect or voltage depression, can be avoided by imposing the so-called { \emph{cycle constraint}, i.e.,  by discharging (charging) the battery to a lower (an upper) limit  before charging (discharging) it again \cite{Voltagedepression}.   The lower and upper limits prevent deep discharging and overcharging, respectively, which are known to damage batteries.}
Further, we also note that practical batteries cannot be charged and discharged simultaneously (half-duplex constraint). Hence, in this work, we consider the design of EH communication systems under (i) the cycle constraint, and (ii) the half-duplex constraint.  

Our goal in the work is to obtain power policies that maximize the long-term average throughputs and throughput regions in single-battery and dual-battery cases in a P2P channel and a MAC under constraints (i) and (ii).
We assume that the throughput is a concave function of the transmit power. 
Due to the above constraints, in the single-battery case, when the battery is being discharged (charged), energy harvesting (transmission) is suspended, and when the battery gets empty (full), charging (transmission) starts. Hence, if the energy from the battery is utilized aggressively, the duration of transmission will be short, implying that  the  time duration over which energy harvesting is suspended is short. However, due to the concavity,  the time-averaged throughput in an \emph{aggressive} policy could be less than that with a \emph{conservative} policy with a slower energy utilization. Similarly, in the dual-battery case, when a battery is being charged, the transmitter draws power from the other battery and the roles of the batteries are switched when the \emph{charging} battery becomes full and the other battery gets empty. In this case, an aggressive policy leads to a lower throughput,  as the working battery may get drained long before the charging battery gets full. 
However, a conservative policy may result in energy overflow, as the battery being discharged may not be empty  when the charging battery becomes full.  
Between these two extremes lies the optimal solution.

Despite receiving significant attention,  the impact of  charging and discharging dynamics on the usable capacity of batteries has been scarcely studied in the EH communications literature~\cite{Review}. 
In a P2P channel, the authors in~\cite{BatteryDegradation,Aging-TC,Aging-TGCN} consider the  interplay between the battery charging and discharging policy and the irreversible degradation (aging) of its storage capacity. This irreversible degradation is different from the voltage depression which can be avoided by applying the cycle constraint.    
In~\cite{JingYang17}, the authors indirectly control the battery degradation by constraining the number of charge and discharge cycles per unit time,  {without modeling the impact of the number of charge and discharge cycles on the battery capacity.}
In \cite{Shaviv-Ozgur}, a Bernoulli energy arrival model assumed, where, in a slot, either an energy packet with energy quantum equal to the capacity of the battery arrives, or no energy arrives. This implies, whenever a packet of energy arrives, the battery fills up completely. In this case, when a fresh energy packet arrives, the residual energy in the battery can be thought to be discarded instantaneously  before  replenishing it with the energy that arrived. Hence,  \cite{Shaviv-Ozgur} implicitly accounts for the cycle constraint since battery has unit capacity. 
In multi-user settings, as in \cite{Shaviv-Ozgur}, references  \cite{OnlineOzgur} and \cite{MAC-Pillai-letter} implicitly assume the cycle constraint and obtain near-optimal online power allocation policies with causal knowledge of harvested powers.  
Different from \cite{Shaviv-Ozgur,OnlineOzgur,MAC-Pillai-letter}, we consider a more general case where multiple energy arrivals may be needed to fill the battery. In addition to the cycle constraint, assumed in  \cite{Shaviv-Ozgur,OnlineOzgur,MAC-Pillai-letter},  we also account for the practical half-duplex constraint in the current work and   study a P2P channel and a MAC.  
For clarity, we first present the P2P channel and then extend the results to a MAC. 
The main contributions of the paper are: 
\begin{itemize}[leftmargin=*]
	\item In a P2P channel under the single-battery case, we obtain an analytical solution to the long-term average throughput maximization problem in the online case with casual knowledge of energy arrivals.  We also show that the throughput maximizing policy performs better than a competitive greedy policy from \cite{Yin-WCNC}. 
	
	\item Further, for a P2P channel under the dual-battery case, we first  obtain  optimal power allocations via dynamic programing and then propose non-adaptive online policies, which do not exploit knowledge of current battery states, and adaptive policies, which adapt power allocations based on battery states when a new energy arrival occurs. 
	\item For a $U$-user MAC, we then derive long-term average achievable  throughput regions in the single-battery and dual-battery cases, based on the online policies proposed for the P2P channel. 
	\item  	From numerical simulations, we show that the performance gap between the optimal policy for an ideal system equipped with infinite capacity batteries and the proposed non-adaptive policies with finite capacity batteries decays faster than the inverse of the square root of the battery capacity. Further, the largest throughput region in the single-battery case is contained within that of the dual-battery case. 
\end{itemize}
In summary, our study finds that, under the cycle and half-duplex constraints,  the optimal performance in the dual-battery case is significantly better than that in the single-battery case, although the total storage capacity in both cases is the same.

The remainder of the paper is organized as follows. The system model is  presented in Section \ref{sec:system-model}. 
We study P2P channel under single-battery and  dual-battery  cases in  Section \ref{sec:SU-SB} and Section \ref{sec:SU-DB}, respectively. We then  consider a MAC in Section \ref{sec:MAC}.
Numerical results are presented in Section \ref{sec:numerical-results}, followed by concluding remarks in Section \ref{sec:conclusions}.

\section{System Model}\label{sec:system-model}
In this work, we consider a P2P channel and a MAC under two cases: (a) the \emph{single-battery} case, in which the  transmitters are equipped with a single battery having storage capacity  of $2B$ units, and  (b) the \emph{dual-battery} case, in which the transmitters are equipped with two batteries, each having storage capacity of $B$ units, as shown in Fig. \ref{fig:single battey} and Fig. \ref{fig:two battery}, respectively. {In both the cases, the transmitters are powered from EH sources. Further, the receiver is connected to the mains, and hence the receiver can always remain ON.
Now, based on our discussion in the introduction, we impose the half-duplex constraint, and the cycle constraint due to which the battery must be discharged (charged) to a lower limit, $C_{\rm min}$  (an upper limit, $C_{\rm max}$) before charging (discharging) it again, with $C_{\rm min}< C_{\rm max}$. In this work, without loss of generality, we assume $C_{\rm min}=0$ for both the single-battery and dual-battery cases, and $C_{\rm max}=2B$ and $C_{\rm max}=B$ for the single-battery and dual battery cases, respectively.}
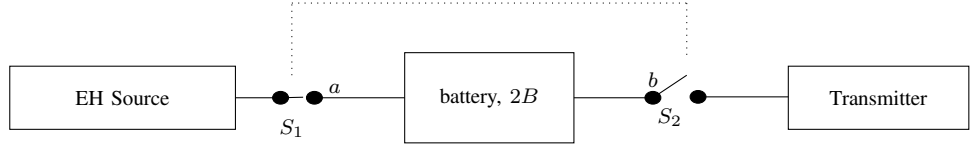
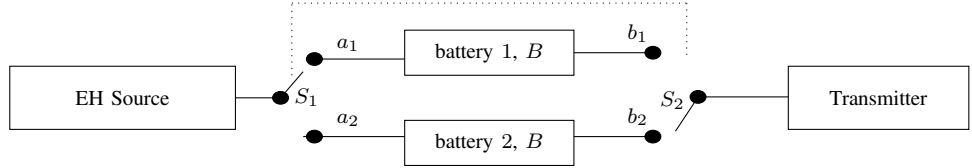
\begin{figure*}[t]
	\centering
	\begin{subfigure}{0.75\textwidth}
		\begin{tikzpicture} [scale=3,yscale=0.8]
		\draw (0,-0.05) rectangle (1,.3) node [align=center,pos=0.5] {\scriptsize EH Source};  
		\draw  (1,0.125) -- (1.20,0.125);
		\filldraw [black] (1.20,0.125) circle (1pt);
		\filldraw [black] (1.35,0.125) circle (1pt) node at ( 1.44, 0.175) {\scriptsize $a$} ;
		\draw (1.75,-0.125) rectangle (2.5,0.375) node [pos=0.5]  {\scriptsize battery, $2B$};
		\draw (1.35,0.125) -- (1.75,0.125);
		\draw (2.5,0.125) -- (2.85,0.125) ;
		\filldraw [black] (2.85,0.125) circle (1pt)  node [above] {\scriptsize $b$}  ;
		\filldraw [black] (3.05,0.13) circle (1pt);
		\draw (3.05,0.13) -- (3.45,0.13);
		\draw (3.45,-0.05) rectangle (4.25,.3) node [align=center,pos=0.5]  {\scriptsize  Transmitter};
		\draw (1.2,0.125) -- (1.30,.128) node [midway,below,yshift=-0.15cm] {\scriptsize $S_1$}; 
		\draw (2.85,.125) -- (3,0.25) node [midway,below,yshift=-0.15cm] {\scriptsize $S_2$} ; 
		\draw [dotted] (1.25,.25) -- (1.25,0.65);
		\draw [dotted] (1.25,.65) -- (3,0.65);			
		\draw [dotted] (3,.65) -- (3,0.35);
		\end{tikzpicture}
		\caption{In the single-battery case, at any instant of time, either the EH source is connected to $a$ or the transmitter is connected to $b$.}
		\label{fig:single battey}
	\end{subfigure} 
	\\
	\begin{subfigure}{0.75\textwidth}
		\begin{tikzpicture} [scale=3,yscale=0.8]
		\draw (0,-0.05) rectangle (1,.3) node [align=center,pos=0.5] {\scriptsize EH Source};  
		\draw  (1,0.125) -- (1.20,0.125);
		\filldraw [black] (1.20,0.125) circle (1pt);
		\filldraw [black] (1.35,0.34) circle (1pt) node at ( 1.5, 0.42) {\scriptsize $a_1$} ;
		\draw (1.35,0.34) -- (1.75,0.34);
		\filldraw [black] (1.35,-0.09) circle (1pt) node at (1.5, 0) {\scriptsize $a_2$} ;
		\draw (1.30,-0.09) -- (1.75,-0.09);
		
		\draw (1.75,-0.25) rectangle (2.5,0) node [pos=0.5]  {\scriptsize battery $2$, $B$ };
		\filldraw [black] (2.85,-0.09) circle (1pt)  node [above,xshift=-0.2cm] {\scriptsize $b_2$}  ;
		\draw (2.5,-0.09) -- (2.85,-.09) ;		
		
		\draw (1.75, .25) rectangle (2.5,0.5) node [pos=0.5] {\scriptsize battery $1$, $B$};
		\filldraw [black] (2.85,0.375) circle (1pt)  node [above,xshift=-0.2cm] {\scriptsize $b_1$}  ;
		\draw (2.5,0.375) -- (2.85,0.375) ;

		\filldraw [black] (3.05,0.13) circle (1pt);
		\draw (3.05,0.13) -- (3.45,0.13);
		\draw (3.45,-0.05) rectangle (4.25,.3) node [align=center,pos=0.5] at (3.85,0.13)  {\scriptsize  Transmitter};
		\draw (1.2,0.125) -- (1.30,.27) node [midway,below,xshift=0.2cm,yshift=0.1cm] {\scriptsize $S_1$}; 
		\draw (2.95,-0.06) -- (3.04,0.11) node [left] {\scriptsize $S_2$} ; 
		\draw [dotted] (1.25,.25) -- (1.25,0.65);
		\draw [dotted] (1.25,.65) -- (3,0.65);			
		\draw [dotted] (3,.65) -- (3,0.35);
		\end{tikzpicture}
		\caption{In the dual-battery case, at any instant of time, the EH source and the transmitter will be connected to $a_1$ ($a_2$) and $b_2$ ($b_1$), respectively.}
		\label{fig:two battery}
	\end{subfigure}
	\caption{We consider EH users equipped with (a) single battery with  capacity of $2B$ units and (b) two batteries, each with  capacity  of $B$ units.}
	\label{fig:system model}
\end{figure*}
The harvested energy is stored in a battery before using it for  transmission.  
In the single-battery case, when the battery is being charged, no transmission is carried out. When transmission occurs, the energy harvesting is suspended, as batteries cannot be charged and discharged simultaneously. Hence, in Fig. \ref{fig:single battey}, when $S_1$ ($S_2$) is closed, $S_2$ ($S_1$) will be open. 
On the other hand, in the dual-battery case, when a battery is being charged from the EH source, the transmitter draws power from other battery. Hence, at any point in time, the EH source and the transmitter will be connected to $a_1$ ($a_2$) and $b_2$ ($b_1$), respectively, as shown in Fig. \ref{fig:two battery}.   
In the sequel, we present the energy arrival model, evolution of battery states and the throughput model adopted in the work.

\subsection{Energy Arrival Model}
We consider a time-slotted system with unit slot length. Energy arrivals in each slot are independent and identically distributed (i.i.d.) Bernoulli random variables with parameter $p$, where $E_H$ units of energy is harvested per arrival, i.e.,  the amount of energy harvested in slot $i$, 
\begin{align}\label{eq:EA}
E_i&=\left\{ 
\begin{array}{l l}
E_H  & \text{w.p. $p$,}\\
0  &   \text{w.p. $1-p$,}\\
\end{array} \right.
\end{align} 
where w.p. stands for ``with probability''. The harvested energy arrives continuously at a constant rate in a slot such that the total energy accumulated at the end of the slot is $E_H$ units.   We also assume $B$ and $E_H$ are related as $B/E_H=r$ for some integer $r\geq 1$ and  note that the average EH rate, $\mu =  pE_H$.


\subsection{Evolution of Battery States}\label{sec:bat_ev}
In the single-battery case, let $B_i$ be the amount of energy stored in the battery at the start of slot $i$. Then, in the charging phase, the time interval over which the battery is charged,    we have, 
\begin{align}
B_{i+1}=\min(B_i+E_i,2B), \;\; i=1,2,\ldots. 
\end{align}
Further, assuming the transmit power in slot $i$ is $P_i$, in the discharging (transmission) phase, the time interval over which the battery is discharged for transmission, we have, 
\begin{align}
B_{i+1}=\max(B_i-P_i,0), \;\; i=1,2,\ldots, 
\end{align}
where charging phase starts when $B_i=0$ and the transmission phase starts when $B_i=2B$. 
Similarly, in the dual-battery case, let $B^w_i$ and $B^c_i$ denote the amount of energy stored in the working and the charging battery at the start of slot $i$, respectively. We assume $B^w_1=B$ and $B^c_1=0$. Then,  the charging and working batteries evolve as follows.  
\begin{align}
B^c_{i+1}=\min(B^c_i+E_i,B), \;\; i=1,2,\ldots,\\
B^w_{i+1}=\max(B^w_i-P_i,0), \;\; i=1,2,\ldots. 
\end{align}
The min and max in the above equations capture the facts that the battery energy cannot exceed its capacity or become negative, respectively.

\subsection{Throughput and an Upper-Bound}
The communication is over an additive white Gaussian noise (AWGN) channel with unit noise power. We assume the throughput with average received signal-to-noise ratio of $P$ \si{\watt} in a slot is given by $R=\frac{1}{2}\log(1+P)$ bits-per-second (bps). All the logarithms are to the base 2. Now, the long-term average throughput is  defined as follows. 
\begin{align}\label{eq:avgT}
T\triangleq \lim_{k\rightarrow \infty}\frac{1}{k}\mathbb{E}\left[\sum_{i=1}^{k}\frac{1}{2}\log(1+P_i)\right]
\end{align}
where the expectation is over all the possible sequences of energy arrivals.  
Our objective is to find the optimal power policy, $P_1,P_2,\ldots$, that maximizes the long-term average throughput, $T$ in \eqref{eq:avgT}.   
In order to benchmark the proposed power allocation policies, we now present an upper-bound on $T$.

{When the cycle and half-duplex constraints are not present, and the nodes are equipped with infinite capacity batteries}, it is optimal to utilize $\epsilon$ smaller amount of power per slot on average, compared to the average harvesting rate, where $\epsilon > 0$ can be arbitrarily small. Further, by the concavity of the rate function, it is optimal to utilize an equal amount of energy $\mu = pE_H=Bp/r$ in every slot, resulting in the long-term average throughput \cite{unconstrained_capacity,Koksal,MAC_VSharma},
\begin{align}\label{eq:ub}
T_{\rm ub}=\frac{1}{2}\log(1+\mu).
\end{align}
The subscript `${\rm ub}$' indicates that this performance is an upper bound on the performance with any other constraints.

\section{P2P Channel: Single-Battery Case}\label{sec:SU-SB}
In this section, we consider a P2P channel under the single-battery case. We first formulate the long-term average throughput maximization problem in the online case and then obtain a closed-form optimal online power allocation policy. We then note that the  optimal online policy allocates transmit energy based on the average energy harvesting rate and  compare the performance of the optimal online policy with a competitive greedy policy from \cite{Yin-WCNC}, which allocates energy based on the instantaneous energy arrival rate in a slot.

\subsection{Problem Formulation}
Let the number of slots required to fill the battery be $L$. Note that $L$ is a random variable. The number of energy arrivals required to fill the battery of capacity $2B$ units is $2r$. Now, since inter-arrival times of energy  is a sequence of geometrically distributed i.i.d random variables with mean $1/p$, we have, $\mathbb{E}[L]=2r/p=2B/(pE_H)$. 
To obtain the optimal long-term average throughput, we assume that the initial energy stored in the battery is $2B$ units.  We then discharge and charge the battery subject to the cycle constraint. Let $P_1,\ldots,P_N$ be the transmit powers over the first $N$ slots such that $\sum_{i=1}^{N}P_i=2B$.   That is, the entire energy in the battery is consumed in first $N$ slots, where $N$ is a variable to be designed. 
Starting from $(N+1)^{\text{th}}$ slot, the battery is charged for $L$ slots to fill it, over which no transmission occurs.  Clearly, $(N+L+1)^{\text{th}}$ slot is a renewal instant. The total reward and the length of the renewal period are $R_i=\sum_{i=1}^{N}\frac{1}{2}\log(1+P_i)$ bps and  $(N+L)$ slots, respectively. Hence, by renewal-reward theorem \cite{Gallager}, from \eqref{eq:avgT}, the long-term average throughput in the single-battery case is equal to,  
\begin{align}\label{eq:SB-1}
T_{\rm SB}(N, P_1,\ldots,P_N)=\frac{\mathbb{E} \left[\sum_{i=1}^{N}R_i\right]}{\mathbb{E}[N+L]}=\frac{\sum_{i=1}^{N}\frac{1}{2}\log(1+P_i)}{N+\frac{2B}{pE_H}},
\end{align}
where the $N$ in the denominator accounts for the penalty due to not storing energy while transmitting.  
Now, to maximize the long-term average throughput, we need to solve the following optimization problem. 
\begin{subequations}\label{eq:P0}
	\begin{align}
	\underset{N,P_1,\ldots,P_N}{\text{maximize}} &\;\; T_{\rm SB}(N, P_1,\ldots,P_N),\;\;\;&&\\
	\text{subject to} &\;\; \sum_{i=1}^{N}P_i= 2B,\; P_i\geq 0, \; N\in \{1,2,\ldots\}, \;i\in \{1,\ldots,N\}.  
	\end{align}
\end{subequations}
We now solve \eqref{eq:P0} in the following under the online case.

\subsection{Optimal Online Solution}\label{sec:SB_opt}
We first note that, for any $N$, due to concavity of the logarithmic function, the numerator is maximized when the transmit powers in all the slots are the same, i.e., when  $P=P_1=P_2=\ldots=P_N$, where $P\triangleq2B/N$. 
Hence, \eqref{eq:P0} can be reformulated as the following problem. 
	\begin{align}\label{eq:P2}
	 \underset{N\in \mathbb{Z}^+}{\text{maximize}} &\;\; \frac{\frac{N}{2}\log\left(1+\frac{2B}{N}\right)}{N+\frac{2B}{pE_H}}.\;\;\;&&
	\end{align}
Note that \eqref{eq:P2} is an integer program. To solve it, we first allow $N$ to take any positive real values and solve it. We then optimally round the solution to satisfy the integer constraint of the original problem. 
\subsubsection{Relaxation}
In the relaxed problem, we denote the transmit power and number of transmission slots by $\tilde{P}$ and $\tilde{N}$, respectively.  
Hence, we have, $\tilde{N}=2B/\tilde{P}$ and the objective function in  \eqref{eq:P2}  becomes  $\mu\log(1+\tilde{P})/({2(\mu+\tilde{P})})$, where the average EH rate, $\mu=pE_H$ under the assumption that $r\geq 1$, i.e.,   $B\geq E_H$. 
Therefore, the optimal transmit power can be obtained by solving,  
\begin{align}\label{eq:SB_power}
 \max_{\tilde{P}\in \mathbb{R}^+} \frac{\mu\log(1+\tilde{P})}{2(\mu+\tilde{P})}. 
\end{align} 
We  present the optimal solution to \eqref{eq:SB_power} in the following lemma. 
\begin{lemma}\label{thm:SB}
	The optimal solution to \eqref{eq:SB_power} is given by
	\begin{align}\label{eq:SB opt_tx_power}
	\tilde{P}^*
	&=\exp(1)\exp\left(W_0\left(\exp(-1)(\mu-1)\right)\right)-1, 
	\end{align}
	and 
	the optimal long-term average throughput is given by 
	\begin{align}\label{eq:SB-real}
	\tilde{T}_{\rm SB}=\frac{\mu}{2\ln 2\exp(1)\exp\left(W_0\left(\exp(-1)(\mu-1)\right)\right)},
	\end{align}
	where $W_0(\cdot)$ is the principal branch of the Lambert W function. 
\end{lemma}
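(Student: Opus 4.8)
The plan is to treat \eqref{eq:SB_power} as an unconstrained one-dimensional maximization over $\tilde P \in \mathbb{R}^+$ and to locate its stationary point in closed form via the Lambert W function. First I would discard the multiplicative constant $\mu/(2\ln 2)$, which does not affect the maximizer, and work with $g(\tilde P) \triangleq \ln(1+\tilde P)/(\mu+\tilde P)$, having converted $\log$ to $\ln$. Setting $g'(\tilde P)=0$ and clearing the positive denominator reduces the first-order condition to the transcendental equation $(\mu+\tilde P)/(1+\tilde P) = \ln(1+\tilde P)$.

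Next I would solve this equation explicitly. Substituting $u = 1+\tilde P$ turns it into $\mu - 1 = u\ln u - u = u(\ln u - 1)$, and the further substitution $w = \ln u - 1$ yields $w\,e^{w} = e^{-1}(\mu-1)$. By the definition of the Lambert W function this gives $w = W_0(\exp(-1)(\mu-1))$, whence $u = \exp(1+w) = \exp(1)\exp(W_0(\exp(-1)(\mu-1)))$ and $\tilde P^* = u-1$, which is exactly \eqref{eq:SB opt_tx_power}. I would then justify the choice of the principal branch: for $\mu \ge 0$ the argument satisfies $\exp(-1)(\mu-1) \ge -\exp(-1)$, so $W_0$ is well defined, and since its range is $[-1,\infty)$ we get $u \ge 1$, i.e.\ $\tilde P^* \ge 0$ (with strict positivity for $\mu > 0$), so the solution is feasible for the domain $\mathbb{R}^+$.

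To certify that this stationary point is the global maximizer rather than merely a critical point, I would study the sign of $g'$ through $\phi(\tilde P) \triangleq (\mu+\tilde P)/(1+\tilde P) - \ln(1+\tilde P)$, whose sign matches that of $g'$. A short computation gives $\phi'(\tilde P) = -(\mu+\tilde P)/(1+\tilde P)^2 < 0$ on $[0,\infty)$, so $\phi$ is strictly decreasing; together with $\phi(0)=\mu>0$ and $\phi(\tilde P)\to-\infty$ as $\tilde P\to\infty$, this shows $\phi$ has a unique zero, at which $g$ switches from increasing to decreasing. Hence $\tilde P^*$ is the unique global maximizer. Finally, to obtain \eqref{eq:SB-real} I would avoid re-substituting the unwieldy closed form directly and instead exploit the first-order condition itself: at $\tilde P^*$ it gives $\ln(1+\tilde P^*)/(\mu+\tilde P^*) = 1/(1+\tilde P^*)$, so $\tilde T_{\rm SB} = (\mu/(2\ln 2))\cdot 1/(1+\tilde P^*) = \mu/(2\ln 2\, u)$, and substituting the expression for $u$ yields \eqref{eq:SB-real}.

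The main obstacle is the algebraic massaging in the second step: recognizing that the stationarity equation, which does not superficially resemble $x e^{x}=c$, can be coaxed into Lambert W form by the two nested substitutions $u = 1+\tilde P$ and $w = \ln u - 1$, while tracking the $\exp(-1)$ and $\exp(1)$ factors correctly. The monotonicity check in the third step is routine but essential, since $g$ is not globally concave and the bare first-order condition alone would not certify a maximum.
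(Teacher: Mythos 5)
Your proof is correct and follows essentially the same route as the paper's: derive the first-order condition $\mu+\tilde P=(1+\tilde P)\ln(1+\tilde P)$, solve it via the Lambert $W$ function, and substitute back. Your treatment is slightly more complete in two respects the paper glosses over --- you certify global optimality by showing the sign function $\phi(\tilde P)=(\mu+\tilde P)/(1+\tilde P)-\ln(1+\tilde P)$ is strictly decreasing with a unique zero (the paper only argues informally about picking the larger of the two Lambert branches), and you use the stationarity identity to simplify $\tilde T_{\rm SB}$ to $\mu/(2\ln 2\,(1+\tilde P^*))$ rather than substituting the closed form directly --- but these are refinements, not a different argument.
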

\begin{proof}
	See Appendix A.  
\end{proof}
\subsubsection{Rounding}
From Lemma \ref{thm:SB}, clearly, $\tilde{N}^*=2B/\tilde{P}^*$. Noting that $\tilde{N}^*$ may not be an integer, we now obtain the optimal solution to \eqref{eq:P2}, with the integer constraint on $N$. 
We note that the objective function in \eqref{eq:P2} is unimodal as it is an increasing  function over $N\in [0,\tilde{N}^*)$ and  a decreasing function over $N\in [\tilde{N}^*,\infty)$. Hence, we choose the optimal $N^*$ that satisfies the integer constraint as follows:  
\begin{align}\label{eq:intN}
N^*=  \underset{N\in\{\lfloor \tilde{N}^*\rfloor,\lceil\tilde{N}^*\rceil\}}{\arg\max}
\frac{\frac{N}{2}\log(1+\frac{2B}{N})}{N+\frac{2B}{\mu}}.
\end{align}
where $\lfloor x\rfloor$ is the greatest integer smaller than $x$ and  $\lceil x\rceil$ is the smallest integer greater than $x$. 
Hence, the maximum long-term average throughput is given by
\begin{align}\label{eq:SB-int}
{T}_{\rm SB}=\frac{\frac{N^*}{2}\log(1+\frac{2B}{N^*})}{N^*+\frac{2B}{\mu}},
\end{align}
where $N^*$ is given by \eqref{eq:intN}.

	From Lemma \ref{thm:SB}, which gives the optimal solution to the relaxed long-term average throughput maximization problem when $B\geq E_H$, we note that the optimal power and throughput depend only on $\mu$, i.e., for any $B$ such that $B\geq E_H$, the optimal power and throughput remain the same. {In other words, the performance with an  infinite capacity  battery can be obtained by using a single battery with capacity of $2B=2E_H$ units, in the relaxed case.} 
		Further, in the presence of integer constraints, from \eqref{eq:intN} and  \eqref{eq:SB-int}, we note that the optimal performance depends both on the mean value of the harvested energy and the battery capacity. This implies that burstiness of the energy arrivals does not impact the optimal performance. 


We now compare the above policy with a competitive greedy policy in the following.

\subsection{Comparison with the Competitive Greedy Policy Studied in \cite{Yin-WCNC}}
The authors in \cite{Yin-WCNC} consider a slotted system with unit slot length. {Further,  they impose only the half-duplex constraint, and  not the cycle constraint.}  
In a slot, the harvested energy is stored for $(1-\tau)$ seconds and the transmission occurs over the remaining $\tau$ seconds. The long-term average throughput maximization problem considered in \cite{Yin-WCNC} is given by
	\begin{align}\label{eq:P22}
	\underset{0\leq \tau\leq 1}{\text{maximize}} &\;\; \frac{p\tau}{2} \log\left(1+\frac{(1-\tau)E_H}{\tau}\right),
	\end{align}
where $E_H$ is the total harvested energy in a slot.  The transmit power is given by $P=(1-\tau)E_H/\tau$. We now give the optimal solution to \eqref{eq:P22} in the following lemma. 
\begin{lemma}\label{thm:SBYin}
	The optimal transmit power for the optimization problem in \eqref{eq:P22} is given by
	\begin{align}\label{eq:SB opt_tx_powerYin}
	P^*=\exp(1)\exp\left(W_0\left(\exp(-1)(E_H-1)\right)\right)-1, 
	\end{align}
the optimal transmit duration $\tau^*=E_H/(P^*+E_H)$, and 
	 the maximum long-term average throughput, 
	\begin{align}\label{eq:SB-realYin}
	{T}_{\rm greedy}=\frac{pE_H}{2\ln 2\exp(1)\exp\left(W_0\left(\exp(-1)(E_H-1)\right)\right)}.
	\end{align}
\end{lemma}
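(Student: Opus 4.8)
The plan is to reduce the maximization in \eqref{eq:P22} to the one already solved in Lemma \ref{thm:SB} by a change of variables, and then read off the answer without redoing any calculus. The key observation is that the transmit-power parametrization $P=(1-\tau)E_H/\tau$ turns \eqref{eq:P22} into exactly $p$ times the objective of \eqref{eq:SB_power} with the parameter $\mu$ replaced by $E_H$, so the optimizer and optimal value are inherited directly from Lemma \ref{thm:SB}.

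First I would invert the power expression to get $\tau=E_H/(P+E_H)$, and note that this is a bijection carrying $\tau\in(0,1)$ onto $P\in(0,\infty)$. I would check that both endpoints of the feasible interval make the objective in \eqref{eq:P22} vanish: at $\tau=0$ through the leading factor $\tau$, and at $\tau=1$ through $\log(1+0)=0$. Hence any maximizer lies in the interior and corresponds to a unique $P\in(0,\infty)$, so it suffices to optimize over $P$ on the whole positive real line.

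Substituting $\tau=E_H/(P+E_H)$ into the objective then gives
\begin{align}
\frac{p\tau}{2}\log\!\left(1+\frac{(1-\tau)E_H}{\tau}\right)=\frac{pE_H}{2}\cdot\frac{\log(1+P)}{P+E_H}=p\cdot\frac{E_H\log(1+P)}{2(E_H+P)}.
\end{align}
The right-hand side is $p$ times the objective of \eqref{eq:SB_power} with the constant $\mu$ replaced by $E_H$. Since $p>0$ is a fixed multiplicative scalar, the maximizer over $P\in\mathbb{R}^+$ coincides with the one in Lemma \ref{thm:SB} under the substitution $\mu\mapsto E_H$; reading \eqref{eq:SB opt_tx_power} with this substitution yields $P^*=\exp(1)\exp\!\left(W_0(\exp(-1)(E_H-1))\right)-1$, which is \eqref{eq:SB opt_tx_powerYin}, and $\tau^*=E_H/(P^*+E_H)$ follows immediately. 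Likewise, the maximum throughput is obtained by multiplying the optimal value \eqref{eq:SB-real} (again with $\mu\mapsto E_H$) by $p$, giving $T_{\rm greedy}=pE_H/\bigl(2\ln 2\,\exp(1)\exp(W_0(\exp(-1)(E_H-1)))\bigr)$, which is \eqref{eq:SB-realYin}.

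The only step requiring genuine care is establishing the boundary behaviour and the bijection of feasible sets, so that optimizing over $P\in(0,\infty)$ is legitimate and no maximizer is lost at the endpoints; I do not anticipate this being a real obstacle. Once the change of variables is justified, the entire calculus content—the stationarity condition and its Lambert-$W$ solution—is inherited verbatim from the proof of Lemma \ref{thm:SB}, so no independent optimization needs to be carried out. I would also note in passing that $W_0$ is well defined here, since its argument $\exp(-1)(E_H-1)\ge -\exp(-1)$ for all $E_H\ge 0$.
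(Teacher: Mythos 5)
Your proposal is correct and follows essentially the same route as the paper, whose proof is just the one-line remark that the result follows ``with appropriate mapping of variables followed by simplification along the lines in the proof of Lemma \ref{thm:SB}''; your change of variables $\tau=E_H/(P+E_H)$, which turns the objective of \eqref{eq:P22} into $p$ times the objective of \eqref{eq:SB_power} with $\mu$ replaced by $E_H$, is precisely that mapping made explicit. The additional care you take with the bijection between $\tau\in(0,1)$ and $P\in(0,\infty)$ and the vanishing of the objective at the endpoints is a welcome filling-in of details the paper leaves implicit.
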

\begin{proof}
The result follows from  \cite{Yin-WCNC}, with appropriate mapping of variables followed by simplification along the lines in the proof of Lemma \ref{thm:SB}. 
\end{proof}
 From the above lemma, we note the optimal performance in the greedy policy depends on both $p$ and $E_H$, unlike in the earlier case where the optimal performance depended only on $\mu$, under the assumption that $B\geq E_H$.  Noting that the denominator in \eqref{eq:SB-realYin} is an increasing function of $E_H$, for a given $\mu=pE_H$, ${T}_{\rm greedy}$ decreases as $E_H$ increases and $p$ decreases.

In the following proposition, we compare the long-term average throughputs in the optimal online policy and the greedy policy.  
Since $\tau$ can take real values, it is meaningful to compare $T_{\rm greedy}$ with $\tilde{T}_{\rm SB}$ in \eqref{eq:SB-real}. 
\begin{proposition}\label{prop:greedy_vs_SB}
$T_{\rm greedy}< \tilde{T}_{\rm SB}$, for any $p<1$ and $E_H>0$. 
\end{proposition}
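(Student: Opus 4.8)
The plan is to exploit the fact that the two closed-form expressions \eqref{eq:SB-real} and \eqref{eq:SB-realYin} share an \emph{identical} numerator and differ only through the Lambert-$W$ term in the denominator. Writing $\mu=pE_H$ and introducing the auxiliary function
\begin{align*}
f(x)\triangleq \exp(1)\exp\!\left(W_0\!\left(\exp(-1)(x-1)\right)\right),
\end{align*}
I would first observe that $\tilde{T}_{\rm SB}=\mu/\bigl(2\ln 2\, f(\mu)\bigr)$ while $T_{\rm greedy}=\mu/\bigl(2\ln 2\, f(E_H)\bigr)$; comparing with \eqref{eq:SB opt_tx_power} and \eqref{eq:SB opt_tx_powerYin} this is just the statement that $f(\mu)=\tilde{P}^*+1$ and $f(E_H)=P^*+1$. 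Hence the claimed inequality $T_{\rm greedy}<\tilde{T}_{\rm SB}$ collapses to the single scalar comparison $f(E_H)>f(\mu)$.

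Next I would establish that $f$ is strictly increasing on $(0,\infty)$. The argument $\exp(-1)(x-1)$ is an increasing affine function of $x$ which, for $x>0$, stays within $(-\exp(-1),\infty)$, precisely the range on which the principal branch $W_0$ is real-valued and strictly increasing; composing with the strictly increasing map $\exp(\cdot)$ then shows that $f$ is strictly increasing. As a self-contained alternative that avoids invoking properties of $W_0$ directly, one can note that $u=f(x)$ is the unique root in $(1,\infty)$ of the stationarity condition $u(\ln u-1)=x-1$ obtained when maximizing \eqref{eq:SB_power}; since the map $u\mapsto u(\ln u-1)$ has derivative $\ln u>0$ on $(1,\infty)$, it is a strictly increasing bijection onto $(-1,\infty)$, so its inverse $f$ is strictly increasing.

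Finally, because $p<1$ and $E_H>0$ force $\mu=pE_H<E_H$ \emph{strictly}, monotonicity of $f$ gives $f(\mu)<f(E_H)$, whence $T_{\rm greedy}=\mu/\bigl(2\ln 2\, f(E_H)\bigr)<\mu/\bigl(2\ln 2\, f(\mu)\bigr)=\tilde{T}_{\rm SB}$, which is the assertion. The delicate points I anticipate are (i) checking that the relevant arguments remain on the principal branch of $W_0$ so that the monotonicity statement actually applies, and (ii) tracking strictness of the inequality, which rests exactly on the strict gap $\mu<E_H$ supplied by $p<1$; at $p=1$ one would have $\mu=E_H$ and hence equality. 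The main conceptual step is the reduction to monotonicity of $f$, after which the conclusion is immediate; I do not expect any genuinely hard calculation.
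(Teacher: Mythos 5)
Your proof is correct and follows essentially the same route as the paper's: both reduce the claim to the strict inequality $\mu = pE_H < E_H$ together with the strict monotonicity of $x \mapsto \exp\bigl(W_0(\exp(-1)(x-1))\bigr)$, the only shared denominator term in \eqref{eq:SB-real} and \eqref{eq:SB-realYin}. Your added care about the branch domain $(-\exp(-1),\infty)$ and the self-contained alternative via the stationarity condition $u(\ln u - 1) = x-1$ are nice refinements, but they do not change the argument in substance.
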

\begin{proof}
By definition, note that $E_H> \mu$ for any $p<1$ and $E_H>0$. Since $W_0(x)$ is a strictly increasing function on $[0,\infty)$, we have, $W_0(\exp(-1)(\mu-1))< W_0(\exp(-1)(E_H-1))$. The result follows by noting that $\exp(\cdot)$ is a strictly increasing function. 
\end{proof}
From Proposition \ref{prop:greedy_vs_SB}, we see that 
the optimal online policy in Lemma \ref{thm:SB}  outperforms the competitive greedy policy in Lemma \ref{thm:SBYin}. We note that both the above polices can be implemented easily in practical systems. 

We conclude the section by noting that the main disadvantage of the single-battery case is that energy harvesting is suspended during transmissions. This can be avoided by using an additional battery which charges up while the working battery is being discharged. Hence, in the following section, we study the two-battery case.

\section{P2P Channel: Dual-Battery Case}\label{sec:SU-DB}
We now consider the P2P channel under the dual-battery case. 
In the sequel, we formulate the long-term average throughput maximization problem, and derive the optimal solution in the online case. We then propose non-adaptive online policies, which do not need knowledge of the current state of the batteries, based on which we propose policies that  adapt the power allocation based on battery state when each new energy arrival occurs. 
  
\subsection{Problem Formulation}
Due to the cycle constraint, which says that the working (charging) battery must be discharged (charged) completely before switching the roles of the batteries, the switching instants are  renewal instants of the battery state processes.  Let $L_k$ denote the length (in slots) of the $k^{\text{th}}$ renewal period. Let the number of slots required for the charging battery to accumulate $B$ units  of energy in the $k^{\text{th}}$ renewal period be $C_k$,  which is less than or equal to $L_k$ by our design.  
Since the system is reset after a renewal instant and because the energy arrivals and allocations are independent across renewal intervals,  $L_1,L_2,\ldots$ form a sequence of i.i.d. random variables. The $k^{\text{th}}$ renewal occurs in the $S_k^{\text{th}}$ slot, where,  
\begin{align}
S_{k+1}=S_{k}+L_k,\qquad k=1,2,\ldots,
\end{align}
where we define $S_1\triangleq 1$.
Further, due to the constraint described above, we must have, 
\begin{align}\label{eq:constraint}
B^{w}_{S_k}=B,\;\;\text{and}\;\;  B^{c}_{S_k}=0,\qquad \forall\; k=1,2,\ldots
\end{align}
Now, to maximize the long-term average throughput defined in \eqref{eq:avgT},  we need to solve the following optimization problem. 
\begin{subequations}\label{eq:p2p-1}
	\begin{align}
	\underset{P_j,\;j=1,2,\ldots}{\text{maximize}} &\;\;\;T &&\\
	\text{subject to} &\;\; \eqref{eq:constraint},  \sum_{k=S_{i-1}}^{S_i-1}P_k\leq B, \; P_j\geq 0,\; i\in \{2,3,\ldots\},\; j\in \{1,2,\ldots\},  \label{eq:energy_neutrality}
	\end{align}
\end{subequations}
where the constraint $\sum_{k=S_{i-1}}^{S_i-1}P_k\leq B$ is because the maximum total amount of energy that can be consumed in a renewal period is $B$. 

Before we proceed to obtain the optimal solutions under various cases, we note the following. 
Firstly, by renewal-reward theorem \cite{Gallager}, from \eqref{eq:avgT}, the long-term average throughput is equal to,  
\begin{align}\label{eq:rate}
T=\frac{\mathbb{E}\left[\sum_{i=1}^{L}\frac{1}{2}\log(1+P_i)\right]  }{\mathbb{E}[L]}, 
\end{align}
where $L$ is the length of the renewal period, whose distribution is identical to  $L_1, L_2,\ldots$.  
The expectation is with respect to the random variable $L$. 
Secondly,  the random variable $C$, which is the number of slots required to accumulate $B$ units of energy, has the negative binomial distribution given by,
\begin{align}\label{eq:neg_bino}
\mathrm{Prob}(C=n)\triangleq q_n={{n-1}\choose{n-r}}p^{r}(1-p)^{n-r},
\end{align}
for $n \in \{r, r+1,\ldots\}$ and $\mathbb{E}(C)=r/p$. Further, the cumulative density function (CDF) is given by, $F_i(r,p)=\sum_{n=1}^{i}q_n$ and the complementary CDF, $\bar{F_i}(r,p)=1-F_i(r,p)=\sum_{n=i+1}^{\infty}q_n$.

\subsection{Optimal Offline Policy}
In the offline policy, we assume the number of slots required to completely charge the battery is known at the start of the current renewal instant, i.e., the realization of $C_{k+1}$ is known at start of slot $S_k$ for $k\in \{1,2,\ldots\}$.  
Due to the concavity of $\frac{1}{2}\log(1+P)$ in $P$, it is optimal to transmit with a constant power over the $C_{k+1}$ slots. Hence, when $C_{k+1}$ is known, we can discharge the working battery such that it gets emptied in the same slot in which  the charging battery fills up. Hence, we have $L_{k+1}=C_{k+1}$ and we transmit at the constant power of $B/L_{k+1}$  in the $k^{\text{th}}$ renewal. Hence, from \eqref{eq:rate} and \eqref{eq:neg_bino}, the optimal long-term average throughput is given by
\begin{align}
T_{\rm off}&=\frac{p}{r}\sum_{n=r}^{\infty}\frac{nq_n}{2}\log\left(1+\frac{B}{n}\right). 
\end{align}
It is easy to numerically compute the value of $T_{\rm off}$.

\subsection{Optimal Online Policy}
We now consider the online case with only causal knowledge of renewal instants. Since the precise time when the charging battery will get full is unknown, power is allocated based only on the distribution of energy arrivals. 
To obtain the optimal online policy, we adopt a dynamic programming framework. We now define the relevant quantities.
\subsubsection{State space}
The state of the system is defined by 3-tuples $s\triangleq (b_1,b_2,\alpha)$, where $b_1$ and $b_2$ are the amounts of energy stored in the first and the second battery, respectively, at the start of a slot and $\alpha$ is an indicator variable defined as follows.  
\begin{equation}
\alpha= \left\{ \,
\begin{IEEEeqnarraybox}[][c]{l?s}
\IEEEstrut
1  & if the first battery is the working battery, \\
0   & if the first battery is the charging battery. 
\IEEEstrut
\end{IEEEeqnarraybox}
\right.
\label{eq:alpha}
\end{equation}
Thus, when $\alpha=1\;(\alpha=0)$,  the second battery is the charging (working) battery. The state~space,  
\begin{align}
\mathcal{S}=\{(b_1,b_2,\alpha): 0\leq b_1,b_2\leq B,\alpha\in\{0,1\} \}
\end{align}
\subsubsection{Action space and reward}
The action space the system can take in  state  $s=(b_1,b_2,\alpha)\in \mathcal{S}$, 
\begin{align}\label{eq:action space}
\mathcal{A}(s)=\{P: 0\leq P\leq b_1\alpha+b_2(1-\alpha)\}
\end{align} 
The constraint on $P$ in \eqref{eq:action space} is due to the fact that $P$ units are drawn from the working battery. We recall that the reward when the system takes action $P\in\mathcal{A}(s)$ in state  $s\in \mathcal{S}$ is given by $\frac{1}{2}\log(1+P)$. 
\subsubsection{State transition probability matrix} 
The next state $s'\triangleq (b_1',b_2',\alpha')\in \mathcal{S}$ when the system takes action $P<b_1\alpha+b_2(1-\alpha)$ in  state $s=(b_1,b_2,\alpha)\in \mathcal{S}$ such that $b_1(1-\alpha)+b_2\alpha<B-E_H$ is given by $
\alpha'=\alpha, \;\;\; \text{w.p.}\;\;1,$ 
and, 
\begin{align}
&(b_1',b_2')= \left\{ \,
\begin{IEEEeqnarraybox}[][c]{l?s}
\IEEEstrut
(b_1-P\alpha,b_2-P(1-\alpha))& w.p.~$1-p$, \\
(b_1-P\alpha+E_H(1-\alpha),b_2-P(1-\alpha)+E_H\alpha)&w.p.~$p$, \IEEEstrut
\end{IEEEeqnarraybox}
\right.
\label{eq:tpm1}
\end{align}
where we have accounted for the fact that, in a slot, the event that the transmitter harvests and stores $E_H$ units of energy in the charging battery occurs with probability $p$. 
Further, the next state $s'=(b_1',b_2',\alpha')\in \mathcal{S}$ when the system takes action $P=b_1\alpha+b_2(1-\alpha)$ in a state $s=(b_1,b_2,\alpha)\in \mathcal{S}$ such that $B>b_1(1-\alpha)+b_2\alpha\geq B-E_H$  is given by
\begin{align}
(b_1',b_2',\alpha')= \left\{ \,
\begin{IEEEeqnarraybox}[][c]{l?s}
\IEEEstrut
((1-\alpha)b_1,\alpha b_2,\alpha)& w.p.~$1-p$, \\
((1-\alpha)B,\alpha B,1-\alpha)&w.p.~$p$, \IEEEstrut
\end{IEEEeqnarraybox}
\right.
\label{eq:tpm2}
\end{align}
where we have accounted for the constraint that the role of the batteries must be switched when the working battery becomes empty and the charging battery becomes full. Finally, when the system takes action $P=b_1\alpha+b_2(1-\alpha)$ in state $s=(b_1,b_2,\alpha)\in \mathcal{S}$ such that $b_1(1-\alpha)+b_2\alpha=B$, it transitions to the next state $s'=((1-\alpha)B,\alpha B,1-\alpha)$ with probability $1$. 
From \eqref{eq:tpm1} and \eqref{eq:tpm2}, we can easily construct the probability transition matrix, $q(s'|P,s)$ for all $s\in \mathcal{S}$ and $P\in\mathcal{A}(s)$. 

\subsubsection{Optimal value function}
We consider  $K$ slots for the optimization.  
We obtain the optimal value function $V_k(s)$ in slot $k\in\{1,\ldots,K\}$ and state $s\in \mathcal{S}$ by solving the following Bellman equation. \begin{align}\label{eq:bellman}
V_k(s)=\max_{P\in\mathcal{A}(s)}\{\frac{1}{2}\log(1+P)+\sum_{s'\in\mathcal{S}}q(s'|P,s)V_{k+1}(s')\}, &&
\end{align}
for $k=K,K-1,\ldots,1$, where $V_{K+1}(s)\triangleq 0$. 
The optimal online throughput is then given by
\begin{align}\label{eq:on-opt}
T_{\rm on}=\lim_{K\rightarrow \infty}\frac{V_K(s)}{K}.
\end{align}
We note that the optimal online throughput, $T_{\rm on}$, can also be obtained by solving the Bellman equation in the infinite horizon case with the discount factor arbitrarily close to one. Since the reward is bounded and the state space is finite, there exists an optimal stationary  deterministic policy for \eqref{eq:bellman}, i.e., there exists a unique optimal action $P^*(s)$ in state $s$ independent of slot indices \cite{Anup_MDP}. Hence, it suffices to search only in the set of all stationary deterministic policies. 

\subsection{Non-Adaptive (NA) Online Policies}
We note that the above optimal online policy, obtained via dynamic programming, adapts the transmit power in every slot based on the state of the batteries. 
In the non-adaptive policies, we assume the states of batteries are not estimated in every slot. However, we assume that a \emph{flag} is raised when the charging battery becomes full or working battery becomes empty. 
Further,  the energy remaining in the working battery is discarded once the charging battery accumulates $B$ units of energy. 
On the other hand, if the working battery gets completely discharged before the charging battery is full, the transmitter waits without transmission till the charging battery gets full. This implies that the renewal instant is the same as the instant when the charging battery accumulates $B$ units of energy.
Hence, the renewal length, $L$ is distributed identically as $C$ in \eqref{eq:neg_bino}. With this setting, we obtain optimal and suboptimal power allocations in the following. 

\subsubsection{Optimal Non-Adaptive (ONA) Online Policy}
Let $\tilde{P}_i$ be the transmit power in slot $i\in \{1,\ldots,L\}$ after  a renewal, i.e., $\tilde{P}_{i}\triangleq P_{S_k+i-1}$ for $k=1,2,\ldots$. Then, the throughput in \eqref{eq:rate} can be re-written as
\begin{align}
T_{\rm ONA}&=\frac{p}{r}\sum_{n=1}^{\infty}q_n\sum_{i=1}^{n}\frac{1}{2}\log(1+\tilde{P}_i)
=\sum_{i=1}^{\infty}\frac{p}{2r}\bar{F}_{i-1}(r,p)\log(1+\tilde{P}_i),\label{eq:online}
\end{align}
where we recall $\bar{F}_{i-1}(r,p)= \sum_{n=i}^{\infty}q_n$. 
From \eqref{eq:p2p-1}, \eqref{eq:rate} and \eqref{eq:online},  to maximize the long-term average throughput in the online case, we need to solve the following optimization problem.
\begin{subequations}\label{eq:P1}
	\begin{align}
	\underset{\substack{\tilde{P}_i, i=1,2,\ldots}}{\text{maximize}} &\;\;\sum_{i=1}^{\infty}\frac{p}{2r}\bar{F}_{i-1}(r,p)\log(1+\tilde{P}_i)\;\;\;&&\\
	\text{subject to} &\;\; \sum_{i=1}^{\infty}\tilde{P}_i\leq B,\;\;\tilde{P}_i\geq 0,\;\;i=1,2,\ldots\label{eq:P2_c2} && 
	\end{align}
\end{subequations}
Clearly, \eqref{eq:P1} is a convex optimization problem. Hence, the Karush-Kuhn-Tucker (KKT) conditions are necessary and sufficient for optimality. Based on the KKT conditions, we obtain the optimal power allocations in the following theorem. 
\begin{theorem}\label{thm:opt_sol}
Let
\begin{align}\label{eq:N}
N=\max\left\{n:\frac{\sum_{i=1}^{n}\bar{F}_{i-1}(r,p)}{B+n}\leq  \bar{F}_{n-1}(r,p)\right\}. 
\end{align}	
For $k = 1, 2, \ldots$, the optimal transmit power $P^*_{S_k +i -1}$ is given by $P^*_{S_k +i -1} = \tilde{P}_i^{\rm ONA}$, where
\begin{align}\label{eq:opt_power}
\tilde{P}^{\rm ONA}_i(B,r,p)=\left\{ 
\begin{array}{l l}
\frac{(B+N)}{\sum_{j=1}^{N}\bar{F}_{j-1}(r,p)}-1 & \text{for $i=1,\ldots,r$, }\\
\frac{(B+N)\bar{F}_{i-1}(r,p)}{\sum_{j=1}^{N}\bar{F}_{j-1}(r,p)}-1 & \text{for $i=r+1,\ldots,N$},\\
0  &   \text{for $i>N$.}
\end{array} \right.
\end{align}
The corresponding throughput under the optimal non-adaptive policy can be obtained by substituting \eqref{eq:opt_power} in \eqref{eq:online} as
\begin{align}
T_{\rm ONA}=\sum_{i=1}^{N}\frac{p}{2r}\bar{F}_{i-1}(r,p)\log\left(\frac{(B+N)\bar{F}_{i-1}(r,p)}{\sum_{j=1}^{N}\bar{F}_{j-1}(r,p)}\right). 
\end{align} 
\end{theorem}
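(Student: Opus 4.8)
The plan is to solve the convex program \eqref{eq:P1} directly through its KKT conditions, which (as noted just before the theorem) are necessary and sufficient here. First I would attach a multiplier $\lambda$ to the energy budget $\sum_i \tilde P_i\le B$ and multipliers $\nu_i$ to the constraints $\tilde P_i\ge 0$, and write stationarity as $\frac{p}{2r\ln 2}\,\frac{\bar F_{i-1}(r,p)}{1+\tilde P_i}-\lambda+\nu_i=0$. Since the objective is strictly increasing in each $\tilde P_i$ (as $\bar F_{i-1}(r,p)>0$ for every finite $i$), the budget is active at optimality, so $\lambda>0$. Complementary slackness $\nu_i\tilde P_i=0$ then gives the water-filling form $\tilde P_i=\bigl[\,w\,\bar F_{i-1}(r,p)-1\,\bigr]^+$, where $w\triangleq p/(2r\lambda\ln 2)$ is a common water level and $[x]^+\triangleq\max(x,0)$.

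Next I would exploit the shape of $\bar F_{i-1}(r,p)$. Being the complementary CDF of the negative-binomial variable $C$ in \eqref{eq:neg_bino}, it is non-increasing in $i$; moreover $\bar F_{i-1}(r,p)=1$ for $i\le r$ (because $C\ge r$ always) and it is strictly decreasing for $i>r$. Hence $w\bar F_{i-1}-1$ is non-increasing in $i$, so the active set is a prefix $\{1,\dots,N\}$ with $\tilde P_i=0$ for $i>N$, and on $\{1,\dots,r\}$ the power is the constant $w-1$. This already reproduces the three-case shape of \eqref{eq:opt_power}, leaving only the identification of $w$ and $N$. Imposing the active budget $\sum_{i=1}^{N}\tilde P_i=B$, i.e. $w\sum_{i=1}^{N}\bar F_{i-1}-N=B$, yields $w=(B+N)/\sum_{j=1}^{N}\bar F_{j-1}(r,p)$; substituting this back into $\tilde P_i=w\bar F_{i-1}-1$ gives exactly \eqref{eq:opt_power}. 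The threshold $N$ is characterised by requiring the $N$-th power to be nonnegative, since $\tilde P_N\ge 0$ is equivalent to $\bar F_{N-1}\ge \sum_{j=1}^N\bar F_{j-1}/(B+N)$, which is precisely the inequality inside the $\max$ in \eqref{eq:N}.

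The main obstacle is establishing self-consistency: $w$ and $N$ are defined in terms of each other, so I must verify that the $N$ produced by \eqref{eq:N} really yields a level for which the powers are nonnegative on exactly $\{1,\dots,N\}$ and the support has no ``holes.'' I would handle this with a crossing argument on $g(n)\triangleq\sum_{i=1}^{n}\bar F_{i-1}(r,p)/(B+n)$ against $\bar F_{n-1}(r,p)$, the engine being the elementary recursion $g(n+1)-g(n)=\bigl(\bar F_n-g(n)\bigr)/(B+n+1)$. From it one checks that $n{+}1$ belongs to $\{n:g(n)\le\bar F_{n-1}\}$ iff $\bar F_n\ge g(n)$, and that once the inequality $g(n)\le\bar F_{n-1}$ first fails it fails for all larger indices (using $\bar F_{n+1}\le\bar F_n$). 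Thus the feasible index set is an initial segment $\{1,\dots,N\}$, so the maximum in \eqref{eq:N} is the genuine water-filling cutoff.

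Once this is secured, uniqueness of the allocation follows from strict concavity of the objective in the active variables, and the final throughput formula is obtained by the routine substitution of \eqref{eq:opt_power} into \eqref{eq:online}, using $w=(B+N)/\sum_{j=1}^{N}\bar F_{j-1}(r,p)$ to collapse the constant-plus-$\bar F_{i-1}$ argument of the logarithm into the stated expression for $T_{\rm ONA}$.
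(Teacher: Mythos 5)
Your proposal is correct and follows essentially the same route as the paper's Appendix B: KKT stationarity plus complementary slackness to get the water-filling form, the active budget constraint to pin down the water level $w=(B+N)/\sum_{j=1}^{N}\bar F_{j-1}(r,p)$, and nonnegativity of $\tilde P_N$ to characterise $N$ as in \eqref{eq:N}. The only difference is that your crossing argument on $g(n)=\sum_{i=1}^{n}\bar F_{i-1}(r,p)/(B+n)$, verifying that the feasible index set in \eqref{eq:N} is an initial segment so that the mutually defined $w$ and $N$ are self-consistent, is a worthwhile piece of rigor that the paper's proof asserts implicitly rather than proves.
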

\begin{proof}
See Appendix B. 
\end{proof}
We note  the probability that the charging battery 
fills up in less than $r$ slots after a renewal is zero. Hence, given a fixed amount of energy that can be consumed in the first $r$ slots, it is intuitive that we must consume it at the constant rate in the first $r$ slots after a renewal, as suggested by \eqref{eq:opt_power}. {We also note that under the policy $\tilde{P}_{\rm ONA}$, the working battery becomes fully discharged exactly after $N$ slots. It strikes the optimal balance between discharging too early (i.e., the transmitter will remain idle till the charging battery gets full) and discharging too late (i.e., there is  wastage of energy as the remaining energy in the working battery is discarded). 
Since we are discarding the remaining energy in the working battery in case the charging battery becomes full before $N$ slots and remain idle in case the charging battery takes more than $N$ slots to become full, the optimal solution in Theorem  \ref{thm:opt_sol} is a suboptimal online policy and clearly, we have, $T_{\rm ONA} \le T_{\rm ub}$.} 

\subsubsection{Suboptimal Non-Adaptive (SNA)  Online Policy}
Based on the above ONA policy, we now propose a suboptimal policy, referred to as the  Suboptimal Non-Adaptive (SNA) policy. Our motivations for proposing the SNA policy are it is simpler than the previous policies, and analytically tractable in the sense that we can use it to lower-bound the optimal long-term average throughput in the dual-battery case.

In the SNA policy, the power allocation in the $i^{\text{th}}$ slot after a renewal instant is given by
\begin{align}\label{eq:approx_prop}
\tilde{P}_i^{\rm SNA}(B,r,p)&=\frac{Bp}{r}\sum_{n=i}^{\infty}q_n=\mu\bar{F}_{i-1}(r,p).
\end{align}
We discard the energy remaining in the working battery when the charging battery becomes full. Since $\sum_{i=1}^{\infty}\bar{F}_{i-1}(r,p)=\sum_{i=1}^{\infty}\sum_{n=i}^{\infty}q_n=\mathbb{E}[C]=r/p$, we note, $\sum_{i=1}^{\infty}\tilde{P}_i^{\rm SNA}=B$. Hence, the power allocation policy in \eqref{eq:approx_prop} does not violate the energy causality constraint. 
Let $T_{\rm SNA}$ denote the long-term average throughput obtained in this strategy. We now have the following result. 
\begin{theorem}\label{thm:approx}
The long-term average throughputs in the ONA and SNA policies are bounded as, 
\begin{align}\label{eq:bounds}
T_{\rm ub}\geq {T_{\rm ONA} \geq T_{\rm SNA}}\stackrel{\text{}}{\geq} T_{\rm ub}-G(r), 
\end{align}
where 
$G(r)\triangleq\max_p -\frac{p}{2r}\sum_{i=1}^{\infty}\left(\sum_{n=i}^{\infty}q_n\right)\log\left( \sum_{n=i}^{\infty} q_n \right)$ and $q_n={{n-1}\choose{n-r}}p^{r}(1-p)^{n-r}$.  

\end{theorem}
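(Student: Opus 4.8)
The plan is to establish the four-term chain in \eqref{eq:bounds} by treating the three inequalities separately, the first two being essentially immediate and the third carrying all the work. The first inequality $T_{\rm ub}\geq T_{\rm ONA}$ has already been argued in the discussion preceding the theorem: the ONA policy is a particular feasible online strategy (it discards residual energy when the charging battery fills early and idles when it fills late), so its long-term throughput cannot exceed the unconstrained upper bound $T_{\rm ub}$ of \eqref{eq:ub}.

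For the second inequality $T_{\rm ONA}\geq T_{\rm SNA}$, I would observe that the SNA allocation $\tilde P_i^{\rm SNA}=\mu\bar F_{i-1}(r,p)$ is a feasible point of the convex program \eqref{eq:P1}: it is nonnegative and, using $\sum_{i=1}^{\infty}\bar F_{i-1}(r,p)=\mathbb{E}[C]=r/p$, it satisfies $\sum_{i=1}^{\infty}\tilde P_i^{\rm SNA}=\mu\,r/p=B$, so the budget constraint \eqref{eq:P2_c2} holds with equality. Since $\tilde P_i^{\rm ONA}$ maximizes the same objective \eqref{eq:online} over this feasible set (Theorem \ref{thm:opt_sol}), evaluating that objective at the feasible SNA point gives a value no larger than the optimum, i.e.\ $T_{\rm ONA}\geq T_{\rm SNA}$.

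The substance is the lower bound $T_{\rm SNA}\geq T_{\rm ub}-G(r)$. Writing $g_i\triangleq\bar F_{i-1}(r,p)$, substituting the SNA powers into \eqref{eq:online} gives $T_{\rm SNA}=\sum_{i=1}^{\infty}\frac{p}{2r}g_i\log(1+\mu g_i)$. The key normalization is that $\sum_{i=1}^{\infty}\frac{p}{r}g_i=1$, which lets me rewrite the upper bound with the \emph{same} weights as $T_{\rm ub}=\tfrac12\log(1+\mu)=\sum_{i=1}^{\infty}\frac{p}{2r}g_i\log(1+\mu)$. Subtracting term by term yields
\begin{align}
T_{\rm ub}-T_{\rm SNA}=\sum_{i=1}^{\infty}\frac{p}{2r}g_i\log\frac{1+\mu}{1+\mu g_i}.
\end{align}
Now I exploit $g_i=\bar F_{i-1}(r,p)\leq\bar F_0(r,p)=1$: the elementary inequality $\frac{1+\mu}{1+\mu g}\leq\frac{1}{g}$ holds for all $0<g\leq1$ and $\mu>0$ (it is equivalent to $g(1+\mu)\leq 1+\mu g$, i.e.\ $g\leq1$). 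Applying it termwise bounds the difference by $-\sum_{i=1}^{\infty}\frac{p}{2r}g_i\log g_i$, and since $G(r)$ is defined as the maximum of exactly this quantity over $p$, the chain closes: $T_{\rm ub}-T_{\rm SNA}\leq G(r)$.

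I expect the only delicate point to be bookkeeping rather than depth: verifying $g_1=\bar F_0=1$ so that $g_i\leq1$ holds uniformly (needed for the pointwise bound), and confirming $\sum_{i}\frac{p}{r}g_i=1$ by interchanging the double sum $\sum_i\sum_{n\ge i}q_n=\sum_n n\,q_n=\mathbb{E}[C]=r/p$, which is what permits folding $T_{\rm ub}$ into the same weighted sum as $T_{\rm SNA}$. The one genuinely clever move is the substitution $\frac{1+\mu}{1+\mu g_i}\leq 1/g_i$, which converts the log-ratio gap term into the entropy-like expression appearing in $G(r)$; without it one would be left with a far less tractable log-ratio.
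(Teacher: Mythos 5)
Your proposal is correct and follows essentially the same route as the paper: the first two inequalities are handled exactly as the paper does (ONA is a feasible policy, hence below $T_{\rm ub}$; SNA is a feasible point of the convex program \eqref{eq:P1} that ONA optimizes), and your key step $\frac{1+\mu}{1+\mu g_i}\leq 1/g_i$ for $g_i\leq 1$ is precisely the paper's inequality $\log(1+\alpha x)\geq\log(1+x)+\log\alpha$ for $0\leq\alpha\leq 1$ written multiplicatively, with the same normalization $\sum_i\frac{p}{r}\bar F_{i-1}(r,p)=1$ used to fold $T_{\rm ub}$ into the weighted sum. The only cosmetic difference is that the paper performs the split inside the renewal-reward expectation before converting to the infinite sum, whereas you work directly with \eqref{eq:online}.
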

\begin{proof}
See Appendix C.	
\end{proof}

    \begin{figure}[t]
	\centering
%
%
\begin{tikzpicture}[scale=0.6]

\begin{axis}[%
width=5.425in,
height=2.968in,
at={(0.91in,0.732in)},
scale only axis,
xmode=log,
xmin=1,
xmax=10000,
xminorticks=true,
xlabel style={font=\color{white!15!black}},
xlabel={$r$},
ymode=log,
ymin=0.001,
ymax=1,
yminorticks=true,
ylabel style={font=\color{white!15!black}},
ylabel={Performance Gap},
axis background/.style={fill=white},
legend style={legend cell align=left, align=left, draw=white!15!black}
]
\addplot [color=black, line width=1.5pt]
  table[row sep=crcr]{%
1	0.721\\
2	0.492041978966458\\
4	0.345467406522295\\
30	0.122119777291118\\
78.0000000000001	0.0749236521705588\\
336	0.0357317803053306\\
2336	0.0134659883942673\\
10000	0.0064950637199501\\
};
\addlegendentry{$G(r)$}

\addplot [color=black, dashed, line width=1.5pt]
table[row sep=crcr]{%
	1	0.721\\
	10000	0.00721\\
};
\addlegendentry{$0.72/\sqrt{r}$}

\end{axis}
\end{tikzpicture}%
	\caption{Variation of $G(r)$ with $r$.}
	\label{fig:bound}
\end{figure}
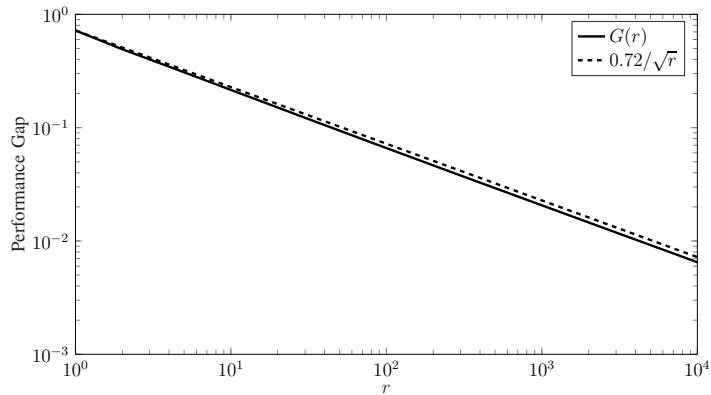

We  make the following remarks on the above theorem. 
\begin{itemize}[leftmargin=*]
	\item Firstly, $G(1)\approx 0.72$ and we recover Proposition 3 of \cite{Shaviv-Ozgur}. 
	\item Further, numerically, we can show that $\max_{r} G(r)=G(1) \approx 0.72$. Hence, the long-term average throughputs in the ONA    and  the SNA policies are at most $0.72$ bits away from the upper bound $T_{\rm ub}$ for any value of $p$, $B$ and $r$, under the Bernoulli energy harvesting model. 
	\item Finally, when the parameters $E_H$ and $p$ are fixed,  $r$ increases  proportionately with $B$,  as $r=B/E_H$. 
	From numerical analysis, we find the bound $G(r)$  
	decreases monotonically at a rate faster than the inverse of the square root of $r$ and $B$, as shown in Fig.    \ref{fig:bound}. 
\end{itemize}

\begin{algorithm}[t]   
	\caption {Power allocation in a renewal period in the proposed adaptive online  policies.}
	\begin{algorithmic}
		\State $i_0=0$.
    	\For {$j=0$ to $r-1$}
    	\For {$k=i_j+1$ to $i_{j+1}$}
    	\State  $\tilde{P}_k^{\rm OA}\leftarrow \tilde{P}_{k-i_j}^{\rm ONA}(B^w_{i_j},r-j,p)$       \Comment{based on optimal non-adaptive policy in \eqref{eq:opt_power}.}	
 	\State  $\tilde{P}_k^{\rm SA}\leftarrow \tilde{P}_{k-i_j}^{\rm SNA}(B^w_{i_j},r-j,p)$       \Comment{based on suboptimal non-adaptive policy in \eqref{eq:approx_prop}.}	
    	\EndFor
    	\EndFor
	\end{algorithmic}
	\label{algo:DB_power_alloc}
\end{algorithm}

\subsection{Adaptive Online Policies}
Here, we assume the state of the charging and working batteries,  denoted by $B^c$ and $B^w$, respectively, are known at the start of every slot. In this policy, we adapt the power allocations based on  $B^c$ and $B^w$. Recall that, by assumption,  it takes $r$ energy arrivals to  fill the charging battery, starting from the empty state. 
Now, until the current slot, suppose that  $j$ energy arrivals have occurred since the last renewal instant, i.e., $B^c=jE_H$. Then, we need $r-j$ more energy arrivals to fill the battery. 
Let $C_{r-j}$ be the random number of slots required for $r-j$ energy arrivals. Then, the complementary CDF of $C_{r-j}$ is given by $\bar{F}_{i}(r-j,p),\; i\in \{1,2,\ldots\}$, where we recall $\bar{F_i}(r,p)=\sum_{n=i+1}^{\infty}q_n$. Now, in the adaptive policy,  we target to allocate $B^w$ units of energy in the working battery based on the distribution of $C_{r-j}$, along the lines in the non-adaptive policies proposed earlier. 
Concretely, the power allocation to $i^{\text{th}}$ slot after $j^{\text{th}}$ energy arrival in a renewal window is given by $\tilde{P}_i^{\rm ONA}(B^w,r-j,p)$, where $\tilde{P}_i^{\rm ONA}(\cdot)$ is defined in \eqref{eq:opt_power}. We refer to this policy as the Optimal-Adaptive (OA) policy. 
Similarly, we can obtain the policy, which we refer to as the  Suboptimal-Adaptive (SA) policy, by allocating $\tilde{P}_i^{\rm SNA}(B^w,r-j,p)$ units of energy to $i^{\text{th}}$ slot after $j^{\text{th}}$ energy arrival in a renewal window,  where $\tilde{P}_i^{\rm SNA}(\cdot)$ is defined in \eqref{eq:approx_prop}. 
We summarize the power allocations in these adaptive policies in Algorithm~\ref{algo:DB_power_alloc}. In the algorithm, $i_j$ represents the index of $j^{\text{th}}$ energy arrival in a renewal window. Further, we represent the long-term average throughputs obtained by OA and SA policies by $T_{\rm OA}$ and  $T_{\rm SA}$, respectively. {
 Since the non-adaptive policies are special cases of adaptive policies, clearly, the optimal performance of the adaptive policies are at least as well as that of the non-adaptive policies. Hence, we have $T_{\rm OA}\geq T_{\rm ONA}$ and $T_{\rm SA}\geq T_{\rm SNA}$.}

\subsection{Constant-Power (CP) Policy}
In the CP policy,  the transmit power remains constant whenever  transmission occurs. Our motivations for proposing this policy are the following. First, practical implementation of such a policy is simple as it does not require the knowledge of battery states. 
Further, prior knowledge of the optimal transmit power can enable system designers to choose appropriate system components such as power amplifiers such that optimal transmit power is in their linear operating region.  
Finally, several variants of this policy, considered in  \cite{Shaviv-Ozgur} and references therein, are shown to perform competitively with optimal policies. Specifically, the  version proposed in \cite{Koksal} has been shown to approach $T_{\rm ub}$ asymptotically with $B$. 
Hence, this policy is also useful in benchmarking the policies studied above.

In the CP policy, we consume the energy available in the battery at a constant rate of $B/(\lfloor rp^{-1} \rfloor)$ as long as the battery is not empty, i.e., the power allocation in the $i^{\text{th}}$ slot after a renewal instant is given by
\begin{align}\label{eq:constant}
\tilde{P}_i&=\frac{B}{\lfloor rp^{-1}\rfloor}, \quad \text{for}\;\; i=1,\ldots, \lfloor r p^{-1}\rfloor.
\end{align}
We discard the remaining energy in the working battery when the charging battery becomes full. 
We denote the long-term average throughput of the policy by $T_{\rm const}$.  

\section{MAC: Single-Battery and Dual-Battery Cases}\label{sec:MAC}
We now consider a $U$-user  MAC where the users communicate to a common receiver over an AWGN channel with  unit noise power.  Let $\mathcal{U}\triangleq \{1,\ldots,U\}$ represent the set of user indices. In the single-battery and dual-battery cases, we assume the user $u\in \mathcal{U}$ is equipped with a single-battery of capacity $2B_u$ units and  with two identical batteries, each of capacity $B_u$ units, respectively.  
Each user applies the half-duplex constraint and the cycle constraint described in Section \ref{sec:system-model}. 
Let $E_u$ be the energy arrival process in user $u$. We assume the energy arrivals are i.i.d. over time and that the amount of energy harvested in slot $i$ in user $u$ is, 
\begin{align}\label{eq:EA-MAC}
E_{ui}&=\left\{ 
\begin{array}{l l}
E_{H_u}  & \text{w.p. $p_u$,}\\
0  &   \text{w.p. $1-p_u$.}\\
\end{array} \right.
\end{align} 
We assume that $B_u/E_{H_u}=r_u$ for some $r_u\in \{1,2,\ldots\}$ and we 
note the average EH rate, $\mu_u =  p_uE_{H_u}$ at  user $u\in\mathcal{U}$. 
The battery evolution at each user is similar to that in the P2P channel case,  described in Section \ref{sec:bat_ev}.  
Let the transmit power in user $u$ in slot $i$ be denoted by $P_{ui}$.
Then, from \cite{OnlineOzgur}, the maximum average throughput region, averaged over all the sample paths of energy arrivals is given by
\begin{align}\label{eq:genCapRegion}
&\mathcal{T}_K\left(P_{uk}\right)
\left\{R_u:\sum_{u\in \mathcal{S}}R_u\leq \frac{1}{K}\mathbb{E}  \left[\sum_{k=1}^{K}\frac{1}{2}\log\left(1+
{\sum_{u\in \mathcal{S}}P_{uk}}\right)\right],\; \forall \mathcal{S}\subseteq \mathcal{U} \right\}.&&
\end{align}
Our goal is to maximize the long-term average throughput region defined as
$\mathcal{T}=\lim_{K\rightarrow \infty}\mathcal{T}_K$.
We now present an outer bound to $\mathcal{T}$ in the following. 
When the cycle and half-duplex constraints are not present, and the capacities of the batteries are infinite, the largest throughput region for the Gaussian MAC is given by \cite{OnlineOzgur}, 
\begin{align}\label{eq:outer}
&\mathcal{T}_{\rm outer}=\left\{R_u:\sum_{u\in \mathcal{S}}R_u\leq \frac{1}{2}\log\left(1+
{\sum_{u\in \mathcal{S}}\mu_u}\right),\; \forall \mathcal{S}\subseteq \mathcal{U} \right\},&&
\end{align}
where $\mu_u$ is the mean energy arrival rate in user $u\in \mathcal{U}$. 

In the sequel, we propose achievable strategies in the single-battery and the dual-battery cases, based on the P2P channel studied in previous sections. 

\subsection{Single-Battery Case}
In the single-battery case, for simplicity, we only consider the relaxed problem, {where we allow the number of slots over which transmission occurs to take a positive real value.}  
As in the P2P channel under the single-battery case studied in Section \ref{sec:SU-SB}, we assume that transmission occurs with a constant power at  each user. Let $(R_u,P_u)$  be transmit rate and power pairs at user $u\in \mathcal{U}$. 
From \eqref{eq:SB_power}, the long-term average throughput in user $u$ is given by
\begin{align}
T_u=\frac{\mu_uR_u}{\mu_u+P_u}. 
\end{align}
We now note that each boundary point on the largest achievable throughput region is the optimal solution to $\max_{\{T_1,\ldots,T_U\}\in \mathcal{T}}\sum_{u=1}^{U}\lambda_uT_u$ for some $(\lambda_1,\ldots,\lambda_U)$, where $\lambda_u\geq 0,\; u\in \mathcal{U}$ and $\sum_{u=1}^{U}\lambda_u=1$.   
Hence, we  obtain all the boundary points on the largest achievable throughput region in this policy by solving the 
following problem for  different instances of $(\lambda_1,\ldots,\lambda_U)$.
\begin{subequations}\label{eq:SB_2U_AC}
	\begin{align}
	\underset{R_u,P_u}{\text{maximize}} &\;\;\sum_{u=1}^{U}\frac{\lambda_u\mu_uR_{u}}{\mu_u+P_u},\;\;\;\label{eq:obj_SB}&&\\
	\text{subject to} 
	&\;\; \sum_{u\in \mathcal{S}}R_u\leq \frac{1}{2}\log\left(1+\sum_{u\in \mathcal{S}} P_u\right),\; \forall \; \mathcal{S}\subseteq \mathcal{U},\; R_u,P_u\geq 0, \; \forall\; u\in\mathcal{U}. 
	\end{align}
\end{subequations}
Note that the above optimization problem is non-convex as \eqref{eq:obj_SB} is a sum of ratios. In the sequel, we transform the above problem into an equivalent convex optimization problem by applying the \emph{quadratic transform},  recently developed in \cite{quadratic_transform}, briefly described below. Consider a concave $A(x)$ and a convex $B(x)$. Then, \cite{quadratic_transform} shows that  following optimization problems are equivalent.  
\begin{align}\label{eq:QT}
\underset{x\in \mathcal{X}}{\text{maximize}}\; \frac{A(x)}{B(x)} \;\; \stackrel{\text{quadratic transform}}{\Longleftrightarrow}    \;\;\underset{x\in \mathcal{X},y\in \mathbb{R}}{\text{maximize}} \;2y\sqrt{A(x)}-y^2B(x).
\end{align}
We note that the optimization problem in the left-hand side is non-convex and that in the right-hand side is convex. 
Now, applying the quadratic transform to  \eqref{eq:SB_2U_AC}, we obtain the following equivalent convex optimization problem.
\begin{subequations}\label{eq:SB_2U_AC-equi}
	\begin{align}
	\underset{R_u,P_u,y_u}{\text{maximize}} &\;\;\sum_{u=1}^{U}\left(2y_u\sqrt{\lambda_u\mu_uR_{u}}-y_u^2(\mu_u+P_u)\right),&&\\
	\text{subject to} 
	&\;\; \sum_{u\in \mathcal{S}}R_u\leq \frac{1}{2}\log\left(1+\sum_{u\in \mathcal{S}} P_u\right),\; \forall \; \mathcal{S}\subseteq \mathcal{U}, \\ 
	&\;\; R_u,P_u\geq 0,\; y_u\in \mathbb{R}, \; \forall u\in\mathcal{U}. 
	\end{align}
\end{subequations}
The above problem can be solved by an alternate maximization over $\{y_u,u\in \mathcal{U}\}$ and $\{(R_u,P_u),u\in \mathcal{U}\}$, as shown in Algorithm \ref{algo:SB_region}. 
\begin{algorithm}[t]   
	\caption {An iterative algorithm to solve \eqref{eq:SB_2U_AC-equi}.}
	\begin{algorithmic}
        \State Initialize  $\{(R_u,P_u),u\in \mathcal{U}\}$ to a feasible value.
        \State \textbf {Step 1:} Update $y_u^*\leftarrow {\sqrt{\lambda_u\mu_uR_{u}}}/{(\mu_u+P_u)}$.
		\State \textbf {Step 2:} Update $\{(R_u,P_u),u\in \mathcal{U}\}$ by solving  \eqref{eq:SB_2U_AC-equi} with $y_u=y_u^*$.
		\State  Repeat Step 1 and Step 2 until convergence. 
 	\end{algorithmic}
 \label{algo:SB_region}
\end{algorithm}
 Step 1 in Algorithm \ref{algo:SB_region} is because, for a fixed $\{(R_u,P_u),u\in \mathcal{U}\}$, the optimal $y_u^*$ in  \eqref{eq:SB_2U_AC-equi} is given by ${\sqrt{\lambda_u\mu_uR_{u}}}/{(\mu_u+P_u)}$. 
Further, the convergence of the above alternate maximization to the global optimum follows from Theorem 3 in \cite{quadratic_transform}. 

\subsection{Dual-Battery Case}
We now consider the dual-battery case and present the following result on the inner region. To obtain the inner region, we assume each user adopts the SNA  policy in \eqref{eq:approx_prop} individually. 
\begin{proposition}\label{thm:inner-MAC}
The long-term average throughput region, $\mathcal{T}$ is bounded by $\mathcal{T}_{\rm inner\_DB}\subseteq\mathcal{T}$, where, 
\begin{flalign}\label{eq:inner}
&\mathcal{T}_{\rm inner\_DB}=\left\{ 
\begin{array}{l}
R_u:\sum_{u\in \mathcal{S}}R_u\leq \frac{1}{2}\log\left(1+
{\sum_{u\in \mathcal{S}}\mu_u}-\min(G(r_u))\right), \forall \mathcal{S}\subseteq \mathcal{U}   
\end{array} \right\}.
\end{flalign} 
\end{proposition}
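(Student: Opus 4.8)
The plan is to let every user $u\in\mathcal{U}$ run its own SNA policy \eqref{eq:approx_prop} independently, so that in slot $k$ user $u$ transmits $P_{uk}=\mu_u\bar{F}_{i_u-1}(r_u,p_u)$, where $i_u$ is the number of slots since user $u$'s last renewal. Since the renewal processes are independent across users and each induced power process is stationary and ergodic, I would first argue, via the renewal--reward theorem \cite{Gallager} at each user together with cross-user independence (or, more cleanly, ergodicity of the finite-state joint battery/phase chain), that the sample-path time averages in \eqref{eq:genCapRegion} converge to expectations under the \emph{product} stationary law. Then, for every $\mathcal{S}\subseteq\mathcal{U}$ the achievable sum-rate constraint becomes $\sum_{u\in\mathcal{S}}R_u\le \mathbb{E}\bigl[\tfrac12\log(1+\sum_{u\in\mathcal{S}}P_u)\bigr]$ with the $P_u$ mutually independent, and the entire proposition reduces to lower-bounding this one expectation by the right-hand side of \eqref{eq:inner}.

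The core step is a \emph{background-robust} version of the single-user gap bound behind Theorem \ref{thm:approx}. Fix any deterministic $X\ge0$, to be interpreted as the aggregate power of the remaining users, and consider a single SNA user. Exactly as in the proof of Theorem \ref{thm:approx}, the key inequality $\tfrac{1+X+\mu_u}{1+X+\mu_u\bar{F}_{i-1}}\le \tfrac{1}{\bar{F}_{i-1}}$ holds for every $i$ and every $X\ge0$, since it rearranges to $\bar{F}_{i-1}(1+X)\le 1+X$, i.e.\ $\bar{F}_{i-1}\le1$, independently of $X$. Weighting by the stationary slot probabilities $\tfrac{p_u}{r_u}\bar{F}_{i-1}$, using $\sum_i\tfrac{p_u}{r_u}\bar{F}_{i-1}=1$, and invoking the definition of $G(\cdot)$ gives, for every fixed $X\ge0$,
\begin{align*}
\mathbb{E}_{P_u}\Bigl[\tfrac12\log(1+X+P_u)\Bigr]\ \ge\ \tfrac12\log(1+X+\mu_u)-G(r_u).
\end{align*}
Because this is pointwise in $X$, it survives taking an expectation over an independent random background.

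With this inequality I would peel the users of $\mathcal{S}$ one at a time. Ordering $\mathcal{S}=\{u_1,\dots,u_m\}$, set $X=\sum_{l\ge2}P_{u_l}$ to remove $u_1$ (paying $-G(r_{u_1})$ and replacing $P_{u_1}$ by the constant $\mu_{u_1}$ in the base), then repeat with $X=\mu_{u_1}+\sum_{l\ge3}P_{u_l}$, and so on. Telescoping yields $\mathbb{E}\bigl[\tfrac12\log(1+\sum_{u\in\mathcal{S}}P_u)\bigr]\ge \tfrac12\log(1+\sum_{u\in\mathcal{S}}\mu_u)-\sum_{u\in\mathcal{S}}G(r_u)$, which is the natural multi-user analogue of Theorem \ref{thm:approx}, with the per-user penalties accumulating additively outside the logarithm.

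The main obstacle I anticipate is the final bookkeeping: reconciling this ``outside-the-log'' additive penalty $\sum_{u\in\mathcal{S}}G(r_u)$ with the ``inside-the-log'' form $\tfrac12\log\!\bigl(1+\sum_{u\in\mathcal{S}}\mu_u-\min(G(r_u))\bigr)$ stated in \eqref{eq:inner}. Folding an additive gap into the argument of $\log$ does not come for free in the required direction, so here I would use that $G(\cdot)$ is uniformly bounded and monotone (cf.\ Fig.~\ref{fig:bound}) to absorb the accumulated gap conservatively into $1+\sum_{u\in\mathcal{S}}\mu_u$, verifying that the resulting single-log expression stays below the telescoped bound so that \eqref{eq:inner} remains a valid \emph{inner} (achievable) region. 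A secondary technical point, already flagged above, is rigorously justifying the passage from the sample-path averages in \eqref{eq:genCapRegion} to product-stationary expectations when the users' renewal cycles are unsynchronized; leaning on ergodicity of the joint chain rather than a joint renewal-reward argument is the cleanest route.
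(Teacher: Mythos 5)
Your ingredients are individually sound --- the product-stationary reduction, the background-robust inequality $\mathbb{E}_{P_u}[\tfrac12\log(1+X+P_u)]\ge\tfrac12\log(1+X+\mu_u)-G(r_u)$ (which is just $1+X+\mu_u\bar{F}_{i-1}\ge\bar{F}_{i-1}(1+X+\mu_u)$ re-weighted), and the peeling/telescoping --- but the argument provably cannot reach the region stated in \eqref{eq:inner}. Your telescoping delivers $\mathbb{E}[\tfrac12\log(1+\sum_{u\in\mathcal{S}}P_u)]\ge\tfrac12\log(1+\sum_{u\in\mathcal{S}}\mu_u)-\sum_{u\in\mathcal{S}}G(r_u)$, with the penalties accumulating as a \emph{sum over $\mathcal{S}$ outside} the logarithm, whereas \eqref{eq:inner} has a \emph{single} $\min_u G(r_u)$ \emph{inside} the logarithm. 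The "absorption" you propose in the last step goes in the wrong direction: writing $M=\sum_{u\in\mathcal{S}}\mu_u$ and $g=\min_u G(r_u)$, the required implication is $\tfrac12\log(1+M)-\sum_{u\in\mathcal{S}}G(r_u)\ge\tfrac12\log(1+M-g)$, i.e.\ $\tfrac12\log\frac{1+M}{1+M-g}\ge\sum_{u\in\mathcal{S}}G(r_u)$, and the left side tends to $0$ as $M$ grows while the right side is bounded away from $0$. This already fails for $|\mathcal{S}|=1$: Theorem \ref{thm:approx} gives $\tfrac12\log(1+\mu_u)-G(r_u)$, which is strictly \emph{smaller} than $\tfrac12\log(1+\mu_u-G(r_u))$ for moderately large $\mu_u$, so no single-user gap bound of the Theorem \ref{thm:approx} type can certify the inside-the-log form. (The numerical region in Fig.~\ref{fig:cap_region} is in fact computed as $\tfrac12\log(1+\sum_{u\in\mathcal{S}}\mu_u)-\min_u G(r_u)$, i.e.\ with the gap outside the logarithm, which suggests the displayed formula in the proposition is itself mistyped; but as written, your proof does not establish it.)

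The paper gives no detailed proof --- it defers to extending Theorem \ref{thm:approx} along the lines of Theorem 1 of \cite{OnlineOzgur} --- so there is no step-by-step argument to match yours against; still, the intended route differs from your peeling in a way that matters. Rather than removing users one at a time (which forces one additive $G(r_u)$ per user), the \cite{OnlineOzgur}-style argument applies the bound jointly: $1+\sum_{u\in\mathcal{S}}\mu_u\bar{F}_{u,i_u-1}\ge\bigl(\min_{u\in\mathcal{S}}\bar{F}_{u,i_u-1}\bigr)\bigl(1+\sum_{u\in\mathcal{S}}\mu_u\bigr)$, yielding a single penalty term $\mathbb{E}\bigl[\max_{u\in\mathcal{S}}\bigl(-\tfrac12\log\bar{F}_{u,i_u-1}\bigr)\bigr]$ rather than a sum of $|\mathcal{S}|$ penalties. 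Bounding that expectation of a maximum by a quantity like $\min_u G(r_u)$ (as opposed to $\sum_u G(r_u)$, which follows trivially from a union-type bound) is precisely the step your proposal is missing, and it is the step any complete proof of the proposition would have to supply.
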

\begin{proof}
The result can be obtained by extending Theorem \ref{thm:approx} along the lines in the proof of Theorem 1 in \cite{OnlineOzgur}. 
\end{proof}

We now propose an achievable scheme based on the proposed ONA policy in the P2P channel case.  Let $R_{ui}$  and $P_{ui}$ be the transmit rate and power in slot $i$ of user $u$. Then, from \eqref{eq:online}, we recall that the long-term average throughput in user $u$ is given by  $\sum_{i=1}^{\infty}({p_u}/{r_u})\bar{F}_{u(i-1)}R_{ui}$, where  $\bar{F}_{u(i-1)}=1-F_{ui}=\sum_{n=i+1}^{\infty}q_{un}$  and   $q_{un}\triangleq {{n-1}\choose{n-r_u}}p_u^{r_u}(1-p_u)^{n-r_u}$.  
Hence, along the lines in the previous subsection, we  obtain all the boundary points on the largest achievable throughput region by solving the 
following convex optimization problem for  different instances of $(\lambda_1,\ldots,\lambda_U)$ with $\sum_{u=1}^{U}\lambda_u=1$.
\begin{subequations}\label{eq:2B-ARC}
	\begin{align}
\underset{\substack{R_{ui}, P_{ui}}}{\text{maximize}} &\;\;\sum_{i=1}^{\infty}\sum_{u=1}^{U}\frac{\lambda_u p_u}{r_u}\bar{F}_{u(i-1)}R_{ui},\;\;\;&&\\
	\text{subject to} &\;\; \sum_{i=1}^{\infty}P_{ui}\leq B_u,\; R_{ui}, P_{ui}\geq 0&&\\
	&\sum_{u\in \mathcal{S}}R_{ui}\leq \frac{1}{2}\log\left(1+\sum_{u\in \mathcal{S}}P_{ui}\right),\; \forall \; \mathcal{S}\subseteq \mathcal{U},&&
	\end{align}
\end{subequations}
for all $u\in \mathcal{U}$ and  $i\in \{1,2,\ldots\}$. 
The above problem is a convex, and hence, it can be solved efficiently using standard numerical techniques. 
In this policy,  we note that  user $u$ transmits with power $P_{ui}$ and rate $R_{ui}$ in slot $i$ after a renewal, independent of rates and transmit powers in other users. 

\begin{figure}[t]
	\centering
%
%

\definecolor{mycolor1}{rgb}{0.00000,0.44700,0.74100}%
\definecolor{mycolor2}{rgb}{0.85000,0.32500,0.09800}%
\definecolor{mycolor3}{rgb}{0.92900,0.69400,0.12500}%
\definecolor{mycolor4}{rgb}{0.49400,0.18400,0.55600}%
\definecolor{mycolor5}{rgb}{0.46600,0.67400,0.18800}%
\begin{tikzpicture}[scale=0.6]

\begin{axis}[%
width=5.425in,
height=3.135in,
at={(0.91in,0.565in)},
scale only axis,
xmin=1,
xmax=22,
xlabel style={font=\color{white!15!black}},
xlabel={$B$ \si{\joule}},
ymin=0.15,
ymax=0.3,
ylabel style={font=\color{white!15!black}},
ylabel={Average Throughput, $T$ (bps)},
axis background/.style={fill=white},
xmajorgrids,
ymajorgrids,
legend style={at={(0.97,0.03)}, anchor=south east, legend cell align=left, align=left, draw=white!15!black}
]
\addplot [color=mycolor1, line width=2.0pt, mark=diamond, mark options={ mycolor1},mark size=4pt]
  table[row sep=crcr]{%
	1	0.292481250360577\\
	2	0.292481250360577\\
	3	0.292481250360577\\
	5	0.292481250360577\\
	8	0.292481250360577\\
	13	0.292481250360577\\
	22	0.292481250360577\\
};
\addlegendentry{$T_{\rm ub}$}

\addplot [color=mycolor2, line width=2.0pt, mark=square, mark options={ mycolor2}]
  table[row sep=crcr]{%
	1	0.277940896830952\\
	2	0.283909909034154\\
	3	0.286438861073343\\
	5	0.28869631546868\\
	8	0.290061019543014\\
	13	0.290971110837884\\
	22	0.291581146561985\\
};
\addlegendentry{$T_{\rm off}$}

\addplot [color=mycolor3, line width=2.0pt, mark=triangle, mark options={ mycolor3},mark size=4pt]
  table[row sep=crcr]{%
	1	0.249999979357096\\
	2	0.265303322717646\\
	3	0.272551598128871\\
	5	0.278981856668377\\
	8	0.28303738167293\\
	13	0.285888274620984\\
	22	0.287923835096297\\
};
\addlegendentry{$T_{\rm on}$}

\addplot [color=mycolor4, line width=2.0pt, mark=asterisk, mark options={ mycolor4},mark size=4pt]
  table[row sep=crcr]{%
	1	0.25\\
	2	0.257573398153568\\
	3	0.263018353726956\\
	5	0.269328304742736\\
	8	0.274171327297292\\
	13	0.278164394171434\\
	22	0.281533111216362\\
};
\addlegendentry{$T_{\rm ONA}$}

\addplot [color=mycolor5, line width=2.0pt, dotted]
  table[row sep=crcr]{%
	1	0.200762077884654\\
	2	0.222201181263419\\
	3	0.233691557919808\\
	5	0.24614192616265\\
	8	0.255523032843882\\
	13	0.263343269298506\\
	22	0.270020633462195\\
};
\addlegendentry{$T_{\rm  SNA }$}

\addplot [color=black, line width=2.0pt, mark=otimes, mark options={black},mark size=4pt]
  table[row sep=crcr]{%
	1	0.166666666666668\\
	2	0.166689875636788\\
	3	0.167279929963225\\
	5	0.167276251862265\\
	8	0.167320645975138\\
	13	0.167309342835591\\
	22	0.167324328034251\\
};
\addlegendentry{$T_{\rm SB}$}

\end{axis}
\end{tikzpicture}%
	\caption{Variation of the optimal long-term average throughput with $B$ for $p=0.1$, $E_H=1$ and with $r=B$.}
	\label{fig:TvsB}
\end{figure}
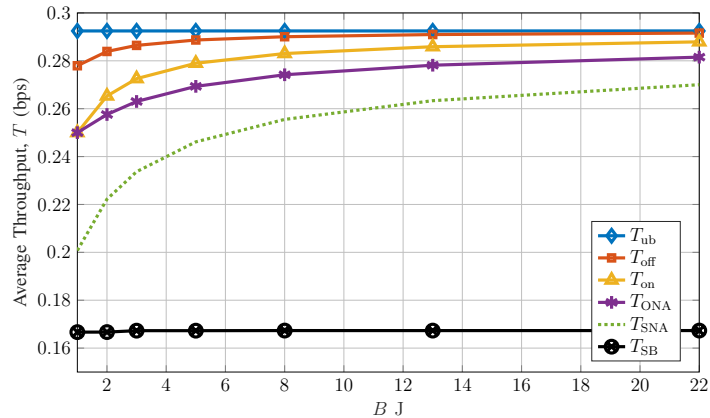
\section{Numerical Results}\label{sec:numerical-results}
In this section, we first compare the long-term average throughputs obtained in the P2P channel under single-battery and dual-battery cases. We then obtain long-term average throughput regions in a MAC using the schemes presented in the previous section. {The parameters used for our simulations are in the similar range as in \cite{Shaviv-Ozgur,MAC-Pillai-letter,MAC-ulukus}}. 

\subsection{Long-Term Average Throughput in a P2P Channel}
In Fig. \ref{fig:TvsB}, we plot variation of the long-term average throughput with the battery capacity in various policies. We note that as the battery capacity increases, the performance gap between offline and online policies decreases. 
Further, the average throughput achieved by the ONA policy approaches the upper bound, $T_{\rm ub}$, much faster than the SNA policy, as the battery capacity $B$ increases. 
We also note the long-term average throughput in the single battery case,  $T_{\rm SB}$ does not depend on the battery capacity. This is because, as seen in \eqref{eq:SB-real}, the maximum long-term average throughput for the relaxed problem in \eqref{eq:SB_power} depends only on the average harvested power; rounding the solution introduces a negligible change in the throughput of the relaxed problem.  

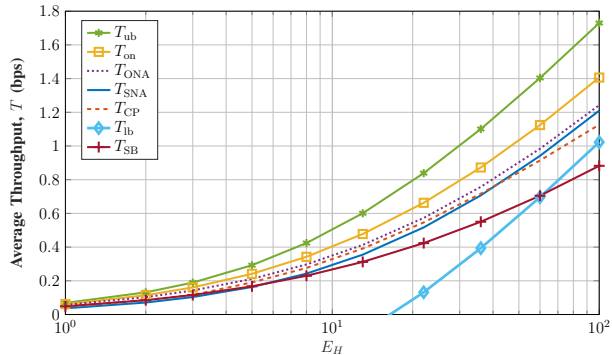
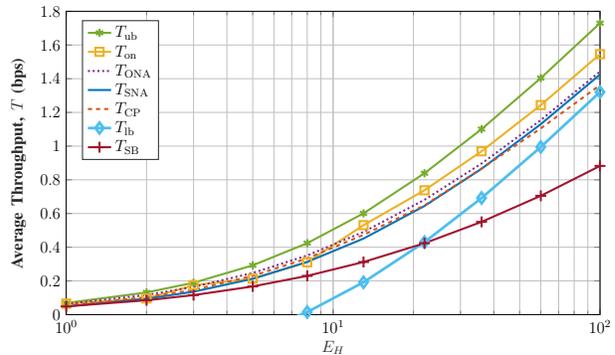
\begin{figure}[t]
	\centering
	\begin{subfigure}{0.48\textwidth}
%
%
\definecolor{mycolor1}{rgb}{0.00000,0.44700,0.74100}%
\definecolor{mycolor2}{rgb}{0.85000,0.32500,0.09800}%
\definecolor{mycolor3}{rgb}{0.92900,0.69400,0.12500}%
\definecolor{mycolor4}{rgb}{0.49400,0.18400,0.55600}%
\definecolor{mycolor5}{rgb}{0.46600,0.67400,0.18800}%
\definecolor{mycolor6}{rgb}{0.30100,0.74500,0.93300}%
\definecolor{mycolor7}{rgb}{0.63500,0.07800,0.18400}%
\begin{tikzpicture}[scale=0.515]

\begin{axis}[%
width=5.425in,
height=3.085in,
at={(0.91in,0.615in)},
scale only axis,
xmode=log,
xmin=1,
xmax=100,
xminorticks=true,
xlabel style={font=\bfseries\color{white!15!black}},
xlabel={$E_H$},
ymin=0,
ymax=1.8,
ylabel style={font=\bfseries\color{white!15!black}},
ylabel={Average Throughput, $T$ (bps)},
axis background/.style={fill=white},
xmajorgrids,
xminorgrids,
ymajorgrids,
legend style={at={(0.03,0.97)}, anchor=north west, legend cell align=left, align=left, draw=white!15!black}
]
\addplot [color=mycolor5, line width=1.5pt, mark=asterisk, mark options={solid, mycolor5},mark size=3pt]
  table[row sep=crcr]{%
1	0.0687517618749673\\
2	0.131517202916897\\
3	0.189255811626865\\
5	0.292481250360578\\
8	0.423998453277475\\
13	0.600816930584825\\
22	0.839035952556319\\
36	1.10081693058483\\
60	1.4036774610288\\
100	1.72971580931865\\
};
\addlegendentry{$T_{\rm ub}$}

%
\addplot [color=mycolor3, line width=1.5pt, mark=square, mark options={solid, mycolor3},mark size=3pt]
  table[row sep=crcr]{%
1	0.0619434276645547\\
2	0.114316259114809\\
3	0.160663234115088\\
5	0.241352199518592\\
8	0.342370132351569\\
13	0.477726279139196\\
22	0.662928120453039\\
36	0.872511029683186\\
60	1.12418856624401\\
100	1.40631256154074\\
};
\addlegendentry{$T_{\rm on}$}

\addplot [dotted,color=mycolor4, line width=1.5pt]
  table[row sep=crcr]{%
1	0.0570685794006538\\
2	0.102650870276474\\
3	0.142468058899552\\
5	0.211221220973765\\
8	0.297087444938823\\
13	0.412830162438682\\
22	0.573046271131767\\
36	0.757584065067167\\
60	0.983762644437749\\
100	1.24250734965098\\
};
\addlegendentry{$T_{\rm ONA}$}

\addplot [color=mycolor1, line width=1.5pt]
  table[row sep=crcr]{%
1	0.0367005606062105\\
2	0.071105184411183\\
3	0.103511179706159\\
5	0.163246117667119\\
8	0.242665687783123\\
13	0.355180529105423\\
22	0.516800343305792\\
36	0.706890282002763\\
60	0.941616643427597\\
100	1.20981548289675\\
};
\addlegendentry{$T_{\rm SNA}$}

\addplot [color=mycolor2, dashed, line width=1.5pt]
table[row sep=crcr]{%
	1	0.0447795047902773\\
	2	0.0856599897575183\\
	3	0.115934232540293\\
	5	0.190499344226354\\
	8	0.276159333983872\\
	13	0.391325020442838\\
	22	0.546482205431164\\
	36	0.716985800392838\\
	60	0.914245393513751\\
	100	1.12660119910911\\
};
\addlegendentry{$T_{\rm CP}$}

\addplot [color=mycolor6, line width=2pt, mark=diamond, mark options={solid, mycolor6},mark size=4pt]
  table[row sep=crcr]{%
13	-0.106289850601722\\
22	0.131929171369772\\
36	0.393710149398278\\
60	0.696570679842255\\
100	1.0226090281321\\
};
\addlegendentry{$T_{\rm lb}$}

\addplot [color=mycolor7, line width=1.5pt, mark=+, mark options={solid, mycolor7},mark size=4pt]
  table[row sep=crcr]{%
1	0.0487468750600963\\
2	0.0850344916345622\\
3	0.115770251344156\\
5	0.167276251862266\\
8	0.229838947077224\\
13	0.312189583475379\\
22	0.423753465690782\\
36	0.550480453953168\\
60	0.705059485541764\\
100	0.882447735197915\\
};
\addlegendentry{$T_{\rm SB}$}

\end{axis}
\end{tikzpicture}%
			\caption{For $B=E_H$, i.e, when a single energy arrival is required to completely fill the battery.}
		\label{fig:BeqE}
	\end{subfigure} 
	\hfill
	\begin{subfigure}{0.48\textwidth}
%
%
\definecolor{mycolor1}{rgb}{0.00000,0.44700,0.74100}%
\definecolor{mycolor2}{rgb}{0.85000,0.32500,0.09800}%
\definecolor{mycolor3}{rgb}{0.92900,0.69400,0.12500}%
\definecolor{mycolor4}{rgb}{0.49400,0.18400,0.55600}%
\definecolor{mycolor5}{rgb}{0.46600,0.67400,0.18800}%
\definecolor{mycolor6}{rgb}{0.30100,0.74500,0.93300}%
\definecolor{mycolor7}{rgb}{0.63500,0.07800,0.18400}%
\begin{tikzpicture}[scale=0.515]

\begin{axis}[%
width=5.425in,
height=3.085in,
at={(0.91in,0.615in)},
scale only axis,
xmode=log,
xmin=1,
xmax=100,
xminorticks=true,
xlabel style={font=\bfseries\color{white!15!black}},
xlabel={$E_H$},
ymin=0,
ymax=1.8,
ylabel style={font=\bfseries\color{white!15!black}},
ylabel={Average Throughput, $T$ (bps)},
axis background/.style={fill=white},
xmajorgrids,
xminorgrids,
ymajorgrids,
legend style={at={(0.03,0.97)}, anchor=north west, legend cell align=left, align=left, draw=white!15!black}
]
\addplot [color=mycolor5, line width=1.5pt, mark=asterisk, mark options={solid, mycolor5},mark size=3pt]
  table[row sep=crcr]{%
1	0.0687517618749673\\
2	0.131517202916897\\
3	0.189255811626865\\
5	0.292481250360578\\
8	0.423998453277475\\
13	0.600816930584825\\
22	0.839035952556319\\
36	1.10081693058483\\
60	1.4036774610288\\
100	1.72971580931865\\
};
\addlegendentry{$T_{\rm ub}$}

%
\addplot [color=mycolor3, line width=1.5pt, mark=square, mark options={solid, mycolor3},mark size=3pt]
  table[row sep=crcr]{%
1	0.065366634257372\\
2	0.0958409109550002\\
3	0.174243119322973\\
5	0.213307438633062\\
8	0.310708980607424\\
13	0.530893934004616\\
22	0.737922390231605\\
36	0.969414123285842\\
60	1.24351464049805\\
100	1.54551160336517\\
};
\addlegendentry{$T_{\rm on}$}

\addplot [dotted,color=mycolor4, line width=1.5pt]
  table[row sep=crcr]{%
1	0.0631717978993596\\
2	0.116701032504938\\
3	0.164279391389413\\
5	0.247216550390003\\
8	0.351157273474786\\
13	0.49064500019764\\
22	0.681305144150236\\
36	0.896720353707426\\
60	1.1547008926645\\
100	1.44258304489203\\
};
\addlegendentry{$T_{\rm ONA}$}

\addplot [color=mycolor1, line width=1.5pt]
  table[row sep=crcr]{%
1	0.0487659291698144\\
2	0.0939619675589376\\
3	0.136100049009201\\
5	0.212751169307197\\
8	0.312790395752203\\
13	0.451311221483303\\
22	0.64475537730782\\
36	0.865527829402721\\
60	1.1303015590157\\
100	1.42477416800025\\
};
\addlegendentry{$T_{\rm SNA}$}

\addplot [color=mycolor2,dashed, line width=1.5pt]
table[row sep=crcr]{%
	1	0.0541434467064152\\
	2	0.103572540876228\\
	3	0.146298942585601\\
	5	0.230335086107655\\
	8	0.333907626983883\\
	13	0.473155866472066\\
	22	0.648593412245138\\
	36	0.866916296967512\\
	60	1.10542528266291\\
	100	1.36218728342418\\
};
\addlegendentry{$T_{\rm CP}$}

\addplot [color=mycolor6,  line width=2pt, mark=diamond, mark options={solid, mycolor6},mark size=4pt]
  table[row sep=crcr]{%
	5	-0.116318749639422\\
	8	0.0151984532774749\\
	13	0.192016930584825\\
	22	0.430235952556319\\
	36	0.692016930584825\\
	60	0.994877461028802\\
	100	1.32091580931865\\
};
\addlegendentry{$T_{\rm lb}$}

\addplot [color=mycolor7, line width=1.5pt, mark=+, mark options={solid, mycolor7},mark size=4pt]
  table[row sep=crcr]{%
1	0.0487489132803591\\
2	0.0850571988046296\\
3	0.115770251344156\\
5	0.167324391289705\\
8	0.22990053824128\\
13	0.312189583475379\\
22	0.423753465690782\\
36	0.550521030060641\\
60	0.705091108788902\\
100	0.882447735197915\\
};
\addlegendentry{$T_{\rm SB}$}

\end{axis}
\end{tikzpicture}%
		\caption{For $B=3E_H$, i.e, when two energy arrivals are required to completely fill the battery.}
		\label{fig:Beq2E}
	\end{subfigure}	
	\caption{Variation of the optimal long-term average throughput with $E_H$ when $B=E_Hr$  for $r=1,3$ and $p=0.1$.}
	\label{fig:ONA_SNA}
\end{figure}

{In Fig. \ref{fig:ONA_SNA}, we plot variation of the long-term average throughput with the amount of energy harvested per arrival, $E_H$, for $B=E_H$ (see Fig. \ref{fig:BeqE})  and $B=3E_H$ (see Fig. \ref{fig:Beq2E}). 
When $B=E_H$, a single energy arrival completely fills up the battery. In this scenario,  the system model of the current paper in the dual-battery case is identical to that in \cite{Shaviv-Ozgur}, and Fig. \ref{fig:BeqE} is similar to Fig. 4 of \cite{Shaviv-Ozgur}, where ONA and SNA policies of the current paper correspond to the optimal policy and the Constant-Fraction policy of \cite{Shaviv-Ozgur}, respectively. 
In this case, the long-term average throughputs in ONA and SNA policies are at most $0.72$ bits away from the upper-bound, as pointed out in the remarks on Theorem \ref{thm:opt_sol}.  
In Fig. \ref{fig:Beq2E}, we set $B=3E_H$. For a given $E_H$, note that the mean value of the harvested energy, $\mu=pE_H$, is the same in both the figures. Hence, the upper-bound remains the same in both the figures. However, the performance of the optimal offline, optimal online, ONA and SNA policies when $B=3E_H$ are better than that when $B=E_H$. Based on Theorem \ref{thm:opt_sol}, the long-term average throughputs in ONA and SNA policies are at most $0.41$ bits away from the upper-bound when $B=3E_H$, as illustrated by the $T_{\rm lb}$ curve.    
From Fig. \ref{fig:BeqE}, we also note that the CP policy performs better than the SNA policy when $E_H$ values are small. However, as $E_H$ increases, the performance gap  of the CP policy from the upper-bound increases, unlike the SNA and ONA policies, which maintain a bounded gap from the upper-bound. This observation had been made for $B=E_H$ case in \cite{Shaviv-Ozgur}. A similar observation holds when $B=3E_H$. 
In the single battery case, the performance curves are \emph{almost identical} in both the figures. This shows that, as highlighted Section \ref{sec:SB_opt}, increasing $B$ has a negligible impact on the performance, for any $B\geq E_H$. We also note that the performance gap of the single-battery case from the upper-bound diverges as $E_H$ increases.}

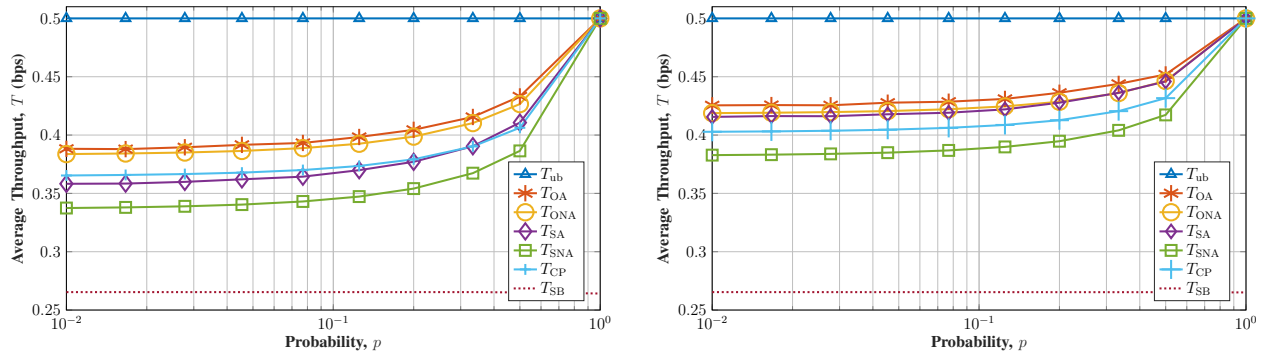
\begin{figure}[t]
	\centering
	\begin{subfigure}{0.48\textwidth}
%
%
\definecolor{mycolor1}{rgb}{0.00000,0.44700,0.74100}%
\definecolor{mycolor2}{rgb}{0.85000,0.32500,0.09800}%
\definecolor{mycolor3}{rgb}{0.92900,0.69400,0.12500}%
\definecolor{mycolor4}{rgb}{0.49400,0.18400,0.55600}%
\definecolor{mycolor5}{rgb}{0.46600,0.67400,0.18800}%
\definecolor{mycolor6}{rgb}{0.30100,0.74500,0.93300}%
\definecolor{mycolor7}{rgb}{0.63500,0.07800,0.18400}%
\begin{tikzpicture}[scale=0.515]

\begin{axis}[%
width=5.425in,
height=3.085in,
at={(0.91in,0.615in)},
scale only axis,
xmode=log,
xmin=0.01,
xmax=1,
xminorticks=true,
xlabel style={font=\bfseries\color{white!15!black}},
xlabel={Probability, $p$},
ymin=0.25,
ymax=0.51,
ylabel style={font=\bfseries\color{white!15!black}},
ylabel={Average Throughput, $T$ (bps)},
axis background/.style={fill=white},
xmajorgrids,
xminorgrids,
ymajorgrids,
legend style={at={(0.97,0.03)}, anchor=south east, legend cell align=left, align=left, draw=white!15!black}
]
\addplot [color=mycolor1, line width=1.5pt, mark=triangle, mark options={solid, mycolor1,mark size=4pt}]
  table[row sep=crcr]{%
	0.01	0.5\\
	0.0166666666666667	0.5\\
	0.0277777777777778	0.5\\
	0.0454545454545455	0.5\\
	0.0769230769230769	0.5\\
	0.125	0.5\\
	0.2	0.5\\
	0.333333333333333	0.5\\
	0.5	0.5\\
	1	0.5\\
};
\addlegendentry{$T_{\rm ub}$}

\addplot [color=mycolor2, line width=1.5pt, mark=asterisk, mark options={solid, mycolor2,mark size=6pt}]
  table[row sep=crcr]{%
0.01	0.388223334036803\\
0.0166666666666667	0.387945277757083\\
0.0277777777777778	0.389525585132257\\
0.0454545454545455	0.391606971217056\\
0.0769230769230769	0.393228695990357\\
0.125	0.398343831692613\\
0.2	0.404477512929089\\
0.333333333333333	0.41551159205395\\
0.5	0.433106346957335\\
1	0.5\\
};
\addlegendentry{$T_{\rm OA}$}

\addplot [color=mycolor3, line width=1.5pt, mark=o, mark options={solid, mycolor3,mark size=6pt}]
  table[row sep=crcr]{%
0.01	0.38369394464205\\
0.0166666666666667	0.384194096419237\\
0.0277777777777778	0.385031373511197\\
0.0454545454545455	0.386373048889519\\
0.0769230769230769	0.388791494477144\\
0.125	0.392565321364758\\
0.2	0.398664852846782\\
0.333333333333333	0.410263464099105\\
0.5	0.42653659894779\\
1	0.5\\
};
\addlegendentry{$T_{\rm ONA}$}

\addplot [color=mycolor4, line width=1.5pt, mark=diamond, mark options={solid, mycolor4,mark size=6pt}]
  table[row sep=crcr]{%
0.01	0.358224627754016\\
0.0166666666666667	0.358413960990988\\
0.0277777777777778	0.359878965930674\\
0.0454545454545455	0.362044304275492\\
0.0769230769230769	0.364347078745504\\
0.125	0.369909839680236\\
0.2	0.377055730861759\\
0.333333333333333	0.390409929812846\\
0.5	0.410664917739458\\
1	0.5\\
};
\addlegendentry{$T_{\rm SA}$}

\addplot [color=mycolor5, line width=1.5pt, mark=square, mark options={solid, mycolor5,mark size=4pt}]
  table[row sep=crcr]{%
0.01	0.337443787012435\\
0.0166666666666667	0.337992978047736\\
0.0277777777777778	0.338913502941772\\
0.0454545454545455	0.340391636065093\\
0.0769230769230769	0.343065908380691\\
0.125	0.347263475382663\\
0.2	0.354107573942067\\
0.333333333333333	0.367331808292354\\
0.5	0.38643262197434\\
1	0.5\\
};
\addlegendentry{$T_{\rm SNA}$}

\addplot [color=mycolor6, line width=1.5pt, mark=+, mark options={solid, mycolor6,mark size=4pt}]
  table[row sep=crcr]{%
0.01	0.365343660117503\\
0.0166666666666667	0.365798829083161\\
0.0277777777777778	0.366562023767516\\
0.0454545454545455	0.367788212385094\\
0.0769230769230769	0.370008784643647\\
0.125	0.373499549620091\\
0.2	0.3792040448\\
0.333333333333333	0.390260631001372\\
0.5	0.40625\\
1	0.5\\
};
\addlegendentry{$T_{\rm CP}$}

\addplot [color=mycolor7, dotted, line width=1.5pt]
  table[row sep=crcr]{%
0.01	0.265368879883378\\
0.333333333333333	0.265368635255312\\
0.5	0.265098389087256\\
1	0.264160416786859\\
};
\addlegendentry{$T_{\rm SB}$}

\end{axis}
\end{tikzpicture}%
		\caption{For $B=2E_H$, i.e, when two energy arrivals are required to completely fill up the battery.}
		\label{fig:burst_2}
	\end{subfigure} 
	\hfill
	\begin{subfigure}{0.48\textwidth}
%
%
\definecolor{mycolor1}{rgb}{0.00000,0.44700,0.74100}%
\definecolor{mycolor2}{rgb}{0.85000,0.32500,0.09800}%
\definecolor{mycolor3}{rgb}{0.92900,0.69400,0.12500}%
\definecolor{mycolor4}{rgb}{0.49400,0.18400,0.55600}%
\definecolor{mycolor5}{rgb}{0.46600,0.67400,0.18800}%
\definecolor{mycolor6}{rgb}{0.30100,0.74500,0.93300}%
\definecolor{mycolor7}{rgb}{0.63500,0.07800,0.18400}%
\begin{tikzpicture}[scale=0.515]

\begin{axis}[%
width=5.425in,
height=3.085in,
at={(0.91in,0.615in)},
scale only axis,
xmode=log,
xmin=0.01,
xmax=1,
xminorticks=true,
xlabel style={font=\bfseries\color{white!15!black}},
xlabel={Probability, $p$},
ymin=0.25,
ymax=0.51,
ylabel style={font=\bfseries\color{white!15!black}},
ylabel={Average Throughput, $T$ (bps)},
axis background/.style={fill=white},
xmajorgrids,
xminorgrids,
ymajorgrids,
legend style={at={(0.97,0.03)}, anchor=south east, legend cell align=left, align=left, draw=white!15!black}
]
\addplot [color=mycolor1, line width=1.5pt, mark=triangle, mark options={solid, mycolor1,mark size=4pt}]
  table[row sep=crcr]{%
	0.01	0.5\\
	0.0166666666666667	0.5\\
	0.0277777777777778	0.5\\
	0.0454545454545455	0.5\\
	0.0769230769230769	0.5\\
	0.125	0.5\\
	0.2	0.5\\
	0.333333333333333	0.5\\
	0.5	0.5\\
	1	0.5\\
};
\addlegendentry{$T_{\rm ub}$}

\addplot [color=mycolor2, line width=1.5pt, mark=asterisk, mark options={solid, mycolor2,mark size=6pt}]
  table[row sep=crcr]{%
0.01	0.425462920450805\\
0.0166666666666667	0.425697540683727\\
0.0277777777777778	0.4255242759539\\
0.0454545454545455	0.427726301045507\\
0.0769230769230769	0.428577999739853\\
0.125	0.430985578241915\\
0.2	0.436255739942532\\
0.333333333333333	0.443858030893386\\
0.5	0.451966797585492\\
1	0.5\\
};
\addlegendentry{$T_{\rm OA}$}

\addplot [color=mycolor3, line width=1.5pt, mark=o, mark options={solid, mycolor3,mark size=6pt}]
  table[row sep=crcr]{%
0.01	0.418803711431974\\
0.0166666666666667	0.419130449622136\\
0.0277777777777778	0.419677172946397\\
0.0454545454545455	0.420552637410455\\
0.0769230769230769	0.422128840763327\\
0.125	0.424583345280798\\
0.2	0.428534984998329\\
0.333333333333333	0.435996690226855\\
0.5	0.446348341392132\\
1	0.5\\
};
\addlegendentry{$T_{\rm ONA}$}

\addplot [color=mycolor4, line width=1.5pt, mark=diamond, mark options={solid, mycolor4,mark size=5pt}]
  table[row sep=crcr]{%
0.01	0.415570782639465\\
0.0166666666666667	0.41625696009658\\
0.0277777777777778	0.416151278753571\\
0.0454545454545455	0.417813510091225\\
0.0769230769230769	0.419131220990305\\
0.125	0.422012882958328\\
0.2	0.427754144787857\\
0.333333333333333	0.436207745095518\\
0.5	0.445940686497195\\
1	0.5\\
};
\addlegendentry{$T_{\rm SA}$}

\addplot [color=mycolor5, line width=1.5pt, mark=square, mark options={solid, mycolor5,mark size=4pt}]
  table[row sep=crcr]{%
0.01	0.38279836643393\\
0.0166666666666667	0.383193929769706\\
0.0277777777777778	0.383856505767711\\
0.0454545454545455	0.384919271807731\\
0.0769230769230769	0.386838418346916\\
0.125	0.389841325199978\\
0.2	0.394712783323524\\
0.333333333333333	0.404035247204667\\
0.5	0.417268771633085\\
1	0.5\\
};
\addlegendentry{$T_{\rm SNA}$}

\addplot [color=mycolor6, line width=1.5pt, mark=+, mark options={solid, mycolor6,mark size=7pt}]
  table[row sep=crcr]{%
0.01	0.402806441348669\\
0.0166666666666667	0.403134611401571\\
0.0277777777777778	0.403684453907154\\
0.0454545454545455	0.404566791864072\\
0.0769230769230769	0.406161350092256\\
0.125	0.408659522617513\\
0.2	0.41272023922156\\
0.333333333333333	0.420517987885767\\
0.5	0.431640625\\
1	0.5\\
};
\addlegendentry{$T_{\rm CP}$}

\addplot [color=mycolor7, dotted, line width=1.5pt]
  table[row sep=crcr]{%
0.01	0.265368879883378\\
0.5	0.265307613899834\\
1	0.265098389087256\\
};
\addlegendentry{$T_{\rm SB}$}

\end{axis}
\end{tikzpicture}%
		\caption{For $B=4E_H$, i.e, when four energy arrivals are required to completely fill up the battery.}
		\label{fig:burst_4}
	\end{subfigure}	
	\caption{Variation of the optimal long-term average throughput with $p$ for a fixed mean harvesting rate, $\mu=1$, with $B=rE_H$ for $r=2,4$.}
	\label{fig:burst}
\end{figure}


{In Fig. \ref{fig:burst}, we plot variation of the long-term average throughput with $p$, for a fixed mean harvesting rate of $\mu=1$, for $B=2E_H$ (see Fig. \ref{fig:burst_2})  and $B=4E_H$ (see Fig. \ref{fig:burst_4}). For the same parameters, we also present the average idle time, the fraction of the slot length over which the transmit power is zero, in Table \ref{tab:title}, and we plot the variation of the average amount of energy discarded per slot, $E_{\rm discarded}$, with $p$ in Fig. \ref{fig:E_discarded}. 
Note that the lower the value of $p$, the higher is the amount of energy harvested per arrival, i.e., the energy arrival becomes more bursty as $p$ is decreased keeping $\mu$ fixed. We make the following key observations from Fig. \ref{fig:burst}, Table \ref{tab:title} and Fig. \ref{fig:E_discarded}. 
Firstly, the long-term average throughputs in all the policies are significantly lower than the upper-bound when $p$ is small and they approach the upper-bound as $p$ is increased. This is because, as seen from Table \ref{tab:title}, when $p$ is small, the average idle time is large. This indicates that, when $p$ is small, all the policies are \emph{aggressive}, i.e., $B$ units of energy in the working battery is consumed in a smaller duration of time as compared to the case when $p$ is higher. Due to the concavity of the throughput, this leads to a degradation in the throughput. Further, as seen from Fig. \ref{fig:E_discarded}, the average amount of energy discarded when $p$ is small is more than that when $p$ is larger. This also reduces the achievable throughput. 
For the higher values of $p$, the harvested energy arrives more uniformly and the power allocation can be nearly constant, and the average idle time and the average amount of energy discarded decrease. Hence, the performance improves as the $p$ is increased.  
Secondly, when the battery capacity, $B$, is increased from $2E_H$ to $4E_H$, the performance improves. This is because, average idle time and the average amount of energy discarded when $B=4E_H$ are significantly lower than that when $B=2E_H$, as seen from Table \ref{tab:title} and Fig.  \ref{fig:E_discarded}, respectively.}
\begin{table}[t]
	\caption {Average idle time, the fraction of the slot length over which the transmit power is zero (expressed as the percentage of the slot length), for $\mu=1$.} \label{tab:title} 
	\centering
	\begin{tabular}{|l|l|l|l|l|}
		\hline
		Policies & \multicolumn{4}{c|}{Average idle time (\% of the slot length)}                   \\ \hline
		& \multicolumn{2}{c|}{$B=2E_H$} & \multicolumn{2}{c|}{$B=4E_H$} \\ \cline{2-5} 
		& $p=0.01$& $p=0.5$ & $p=0.01$ & $p=0.5$      \\ \hline
		OA  Policy  & $31.0$\%  & $22.0$\%  & $26.0$\%   & $15.0$\%           \\ \hline
		ONA Policy  & $26.0$\%  & $19.0$\%  & $20.0$\%   & $12.0$\%           \\ \hline
		SA  Policy  & $~0.62$\%& $~0.59$\%& $~0.3$\%  & $~0.2$\%           \\ \hline
		SNA   Policy& $~0.2$\% &$~0.15$\% & $~0.17$\% & $~0.05$\%           \\ \hline
		CP  Policy  & $29.0$\%  &$19.0$\%   & $22.0$\%   & $12.0$\%           \\ \hline
		SB Optimal Policy & $63.0$\% &$62.0$\%   & $63.0$\%   & $63.0$\%           \\ \hline
	\end{tabular}
\end{table} 
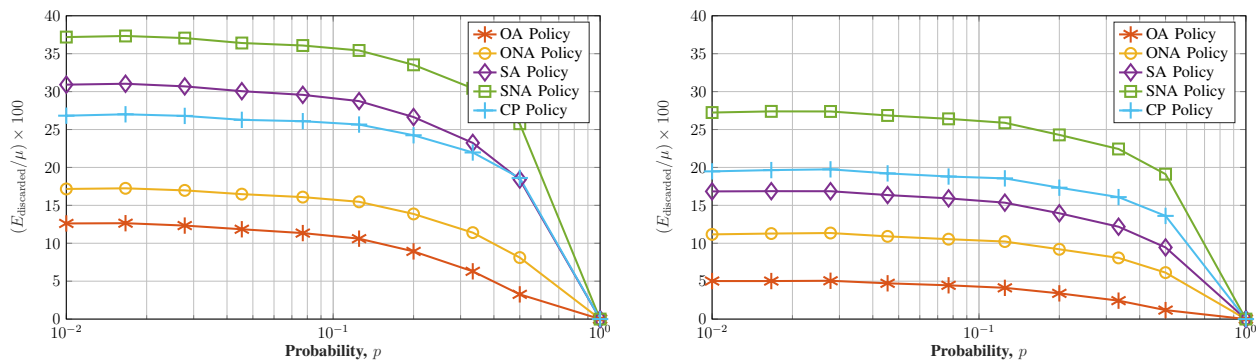
\begin{figure}[!b]
	\centering
	\begin{subfigure}{0.48\textwidth}
%
%
\definecolor{mycolor1}{rgb}{0.85000,0.32500,0.09800}%
\definecolor{mycolor2}{rgb}{0.92900,0.69400,0.12500}%
\definecolor{mycolor3}{rgb}{0.49400,0.18400,0.55600}%
\definecolor{mycolor4}{rgb}{0.46600,0.67400,0.18800}%
\definecolor{mycolor5}{rgb}{0.30100,0.74500,0.93300}%
\begin{tikzpicture}[scale=0.515]

\begin{axis}[%
width=5.425in,
height=3.085in,
at={(0.91in,0.615in)},
scale only axis,
xmode=log,
xmin=0.01,
xmax=1,
xminorticks=true,
xlabel style={font=\bfseries\color{white!15!black}},
xlabel={Probability, $p$},
ymin=0,
ymax=40,
ylabel style={font=\bfseries\color{white!15!black}},
ylabel={$(E_{\rm discarded}/\mu)\times 100$},
axis background/.style={fill=white},
xmajorgrids,
xminorgrids,
ymajorgrids,
legend style={legend cell align=left, align=left, draw=white!15!black}
]
\addplot [color=mycolor1, line width=1.5pt, mark=asterisk, mark options={solid, mycolor1},mark size=6pt]
  table[row sep=crcr]{%
0.01	12.6134300315008\\
0.0166666666666667	12.6448779592051\\
0.0277777777777777	12.3320432455468\\
0.0454545454545454	11.8466581221321\\
0.0769230769230769	11.3405212619981\\
0.125	10.6067474267598\\
0.2	8.92768038774779\\
0.333333333333333	6.29830531457014\\
0.5	3.2617764130938\\
1	0\\
};
\addlegendentry{OA Policy}

\addplot [color=mycolor2, line width=1.5pt, mark=o, mark options={solid, mycolor2},mark size=4pt]
  table[row sep=crcr]{%
0.01	17.1610737366174\\
0.0166666666666667	17.246581533623\\
0.0277777777777777	16.9737510489732\\
0.0454545454545454	16.4774454441133\\
0.0769230769230769	16.087326228158\\
0.125	15.4640137901199\\
0.2	13.8673918138789\\
0.333333333333333	11.4025913894979\\
0.5	8.11457098634672\\
1	0\\
};
\addlegendentry{ONA Policy}

\addplot [color=mycolor3, line width=1.5pt, mark=diamond, mark options={solid, mycolor3},mark size=6pt]
  table[row sep=crcr]{%
0.01	30.9180954863279\\
0.0166666666666667	31.0362359304377\\
0.0277777777777777	30.6888449591049\\
0.0454545454545454	30.0560196311323\\
0.0769230769230769	29.5716253632875\\
0.125	28.7454924082444\\
0.2	26.6451503018372\\
0.333333333333333	23.2346189627129\\
0.5	18.4002373869789\\
1	0\\
};
\addlegendentry{SA Policy}

\addplot [color=mycolor4, line width=1.5pt, mark=square, mark options={solid, mycolor4},mark size=4pt]
  table[row sep=crcr]{%
0.01	37.1908285512867\\
0.0166666666666667	37.3354743124278\\
0.0277777777777777	37.0586659402885\\
0.0454545454545458	36.4085456769223\\
0.0769230769230775	36.0785546876092\\
0.125	35.4256314915697\\
0.2	33.5315116033\\
0.333333333333333	30.4321918688097\\
0.5	25.7647224242153\\
1	0\\
};
\addlegendentry{SNA Policy}

\addplot [color=mycolor5, line width=1.5pt, mark=+, mark options={solid, mycolor5},mark size=6pt]
  table[row sep=crcr]{%
0.01	26.8245527037277\\
0.0166666666666667	27.0075282125507\\
0.0277777777777777	26.7953368768248\\
0.0454545454545454	26.2748259386664\\
0.0769230769230769	26.0983982435306\\
0.125	25.6559452708302\\
0.2	24.2445775694548\\
0.333333333333333	21.9720574599903\\
0.5	18.6116420787145\\
1	0\\
};
\addlegendentry{CP Policy}

\end{axis}
\end{tikzpicture}%
		\caption{For $B=2E_H$, i.e, when two energy arrivals are required to completely fill up the battery.}
		\label{fig:E_discarded_2}
	\end{subfigure} 
	\hfill
	\begin{subfigure}{0.48\textwidth}
%
%
\definecolor{mycolor1}{rgb}{0.85000,0.32500,0.09800}%
\definecolor{mycolor2}{rgb}{0.92900,0.69400,0.12500}%
\definecolor{mycolor3}{rgb}{0.49400,0.18400,0.55600}%
\definecolor{mycolor4}{rgb}{0.46600,0.67400,0.18800}%
\definecolor{mycolor5}{rgb}{0.30100,0.74500,0.93300}%
\begin{tikzpicture}[scale=0.515]

\begin{axis}[%
width=5.425in,
height=3.085in,
at={(0.91in,0.615in)},
scale only axis,
xmode=log,
xmin=0.01,
xmax=1,
xminorticks=true,
xlabel style={font=\bfseries\color{white!15!black}},
xlabel={Probability, $p$},
ymin=0,
ymax=40,
ylabel style={font=\bfseries\color{white!15!black}},
ylabel={$(E_{\rm discarded}/\mu)\times 100$},
axis background/.style={fill=white},
xmajorgrids,
xminorgrids,
ymajorgrids,
legend style={legend cell align=left, align=left, draw=white!15!black}
]
\addplot [color=mycolor1, line width=1.5pt, mark=asterisk, mark options={solid, mycolor1},mark size=6pt]
  table[row sep=crcr]{%
	0.01	5.01619491412857\\
	0.0166666666666667	5.01589074313205\\
	0.0277777777777777	5.05394986041265\\
	0.0454545454545454	4.71817754730741\\
	0.0769230769230769	4.45505098845091\\
	0.125	4.12113175539412\\
	0.2	3.39372012020177\\
	0.333333333333333	2.43150306575118\\
	0.5	1.18408253381008\\
	1	0\\
};
\addlegendentry{OA Policy}

\addplot [color=mycolor2, line width=1.5pt, mark=o, mark options={solid, mycolor2},mark size=4pt]
  table[row sep=crcr]{%
	0.01	11.1719534037229\\
	0.0166666666666667	11.2805803259264\\
	0.0277777777777777	11.3502960167698\\
	0.0454545454545454	10.908744813443\\
	0.0769230769230769	10.5385811301219\\
	0.125	10.2260156045764\\
	0.2	9.20494058588031\\
	0.333333333333333	8.06840052816673\\
	0.5	6.13073817897589\\
	1	0\\
};
\addlegendentry{ONA Policy}

\addplot [color=mycolor3, line width=1.5pt, mark=diamond, mark options={solid, mycolor3},mark size=6pt]
  table[row sep=crcr]{%
	0.01	16.8390050422826\\
	0.0166666666666667	16.869483545652\\
	0.0277777777777777	16.8625119854225\\
	0.0454545454545454	16.3562011945448\\
	0.0769230769230769	15.9182379136164\\
	0.125	15.3536929781386\\
	0.2	13.9555801444029\\
	0.333333333333333	12.1816773261604\\
	0.5	9.45087964712176\\
	1	0\\
};
\addlegendentry{SA Policy}

\addplot [color=mycolor4, line width=1.5pt, mark=square, mark options={solid, mycolor4},mark size=4pt]
  table[row sep=crcr]{%
	0.01	27.2394663561066\\
	0.0166666666666667	27.3918608020033\\
	0.0277777777777777	27.3812097283121\\
	0.0454545454545454	26.8537904665671\\
	0.0769230769230769	26.4094721608668\\
	0.125	25.8869087247301\\
	0.2	24.2915622002178\\
	0.333333333333333	22.4328000496315\\
	0.5	19.1135837795862\\
	1	0\\
};
\addlegendentry{SNA Policy}

\addplot [color=mycolor5, line width=1.5pt, mark=+, mark options={solid, mycolor5},mark size=6pt]
  table[row sep=crcr]{%
	0.01	19.479806159895\\
	0.0166666666666667	19.6491706080016\\
	0.0277777777777777	19.7485274561688\\
	0.0454545454545454	19.215073436885\\
	0.0769230769230769	18.8022926874779\\
	0.125	18.5568003427145\\
	0.2	17.3285173067801\\
	0.333333333333333	16.0854369503422\\
	0.5	13.6093146750299\\
	1	0\\
};
\addlegendentry{CP Policy}

\end{axis}
\end{tikzpicture}%
		\caption{For $B=4E_H$, i.e, when four energy arrivals are required to completely fill up the battery.}
		\label{fig:E_discarded_4}
	\end{subfigure}	
	\caption{Variation of the average amount of energy discarded per slot, $E_{\rm discarded}$ (expressed as the percentage of the mean energy arrival rate, $\mu$),  with $p$ for a fixed mean harvesting rate $\mu=1$, with $B=rE_H$ for $r=2,4$.}
	\label{fig:E_discarded}
\end{figure}
{
Finally, we note that some policies are more \emph{robust} to burstiness than others.  Specifically, performance of the OA and SA policies are better than that of ONA and SNA policies for all values of $p$. Moreover, the variation of OA and SA policies with $p$ is less than that of the ONA and SA policies. Further, the CP policy is more robust than the SNA policy for the chosen values. 
As expected, the long-term average throughput in the single-battery case does not vary with $p$, as $\mu$ is kept fixed.}


\subsection{Long-Term Average Throughput Regions in a MAC}
In Fig. \ref{fig:cap_region}, we present long-term average throughput regions in a two-user MAC under single-battery and dual-battery cases for symmetric (Fig. \ref{fig:sym}) and asymmetric (Fig. \ref{fig:asym}) settings. 
In both the settings, the battery capacity and the mean value of the harvested energy remain the same with $B_1=B_2=100$ units and $\mu_1=\mu_2=25$. Further, in the symmetric case, the distribution of the energy arrivals is the same in both the users. However, in the asymmetric case,  the energy arrivals are more bursty in user $2$. 
We assume $r_1=r_2=2$ and $r_1=2, \; r_2=1$ in the symmetric and asymmetric cases, respectively. Hence, we have, $\min_{r_u}(G(r_u))=0.51$ and $\min_{r_u}(G(r_u))=0.72$ in the symmetric and asymmetric cases, respectively. 
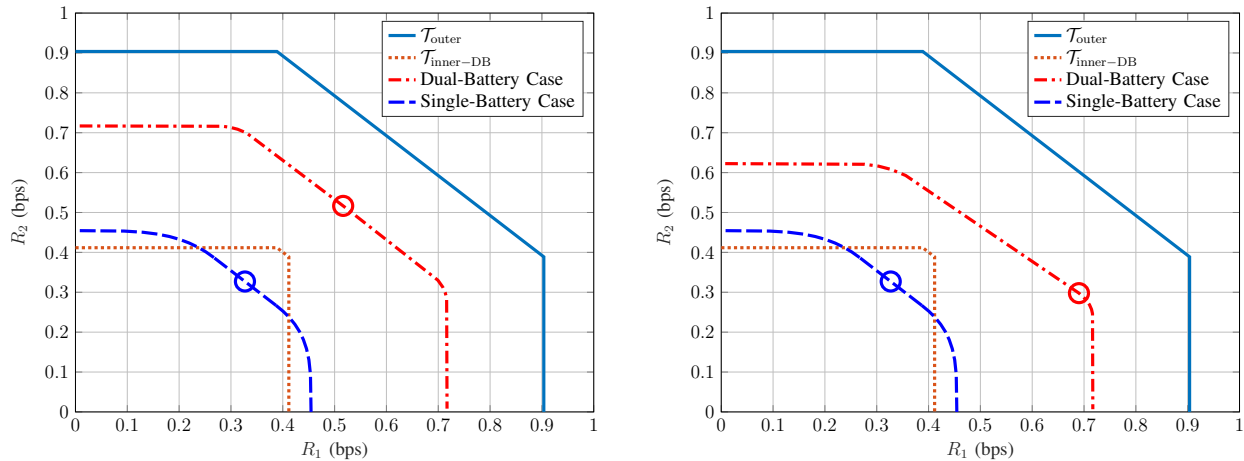
\begin{figure}[t]
	\centering
	\begin{subfigure}{0.48\textwidth}
%
%
\definecolor{mycolor1}{rgb}{0.00000,0.44700,0.74100}%
\definecolor{mycolor2}{rgb}{0.85000,0.32500,0.09800}%
\begin{tikzpicture}[scale=0.6]
\tikzstyle{densely dashed}=          [dash pattern=on 10pt off 3pt]
\tikzstyle{densely dashdotted}=      [dash pattern=on 8pt off 3pt on \the\pgflinewidth off 3pt]

\begin{axis}[%
width=4.521in,
height=3.482in,
at={(0.758in,0.565in)},
scale only axis,
xmin=0,
xmax=1,
xlabel style={font=\color{white!15!black}},
xlabel={$R_1$ (bps)},
ymin=0,
ymax=1,
ylabel style={font=\color{white!15!black}},
ylabel={$R_2$ (bps)},
axis background/.style={fill=white},
xmajorgrids,
ymajorgrids,
legend style={legend cell align=left, align=left, draw=white!15!black}
]
\addplot [color=mycolor1, line width=2pt]
  table[row sep=crcr]{%
	0	0.903677461028802\\
	0.388803789331776	0.903677461028802\\
	0.903677461028802	0.388803789331776\\
	0.903677461028802	0\\
};
\addlegendentry{$\mathcal{T}_{\rm outer}$}

\addplot [color=mycolor2, line width=2pt,dotted]
  table[row sep=crcr]{%
	0	0.411677461028802\\
	0.388803789331776	0.411677461028802\\
	0.411677461028802	0.388803789331776\\
	0.411677461028802	0\\
};
\addlegendentry{$\mathcal{T}_{\rm inner-DB}$}

\addplot [color=red, line width=2pt, draw=none, mark=o, mark options={ red}, mark size=6pt, forget plot]
  table[row sep=crcr]{%
	0.516540075973236	0.516540072752745\\
};

\addplot [color=red, line width=2pt,densely dashdotted]
  table[row sep=crcr]{%
	0.00819528045533224	0.71693951062692\\
	0.283880440605426	0.716157694357974\\
	0.297894183367373	0.713711056615041\\
	0.312795519779515	0.708748105954429\\
	0.330080761155478	0.699379034679675\\
	0.516540075973236	0.516540072752745\\
	0.69937903467963	0.330080761155559\\
	0.708748105954429	0.312795519779508\\
	0.713711056614905	0.297894183367929\\
	0.716157694357948	0.283880440605692\\
	0.716939510626926	0.00819528029836902\\
};
\addlegendentry{Dual-Battery Case}

\addplot [color=blue, densely dashed, line width=2pt]
  table[row sep=crcr]{%
	0.3869611450239	0.267049277092011\\
	0.400946018574988	0.251948992870873\\
	0.413010654852979	0.236254055743027\\
	0.42340185717397	0.21959005802245\\
	0.432282533647377	0.201499658285084\\
	0.43974304703896	0.181366015664567\\
	0.445804226799671	0.158249333148556\\
	0.450412652748096	0.130438424650707\\
	0.453425485816623	0.0936624568518499\\
	0.454543565170295	0\\
};
\addlegendentry{Single-Battery Case}

\addplot [color=blue, densely dashed, line width=2pt, forget plot]
  table[row sep=crcr]{%
	0.267049277092011	0.3869611450239\\
	0.251948992870873	0.400946018574988\\
	0.236254055743027	0.413010654852979\\
	0.21959005802245	0.42340185717397\\
	0.201499658285084	0.432282533647377\\
	0.181366015664567	0.43974304703896\\
	0.158249333148556	0.445804226799671\\
	0.130438424650707	0.450412652748096\\
	0.0936624568518499	0.453425485816623\\
	0	0.454543565170295\\
};
\addplot [color=blue, densely dashed, line width=2pt, forget plot]
  table[row sep=crcr]{%
	0.3869611450239	0.267049277092011\\
	0.267049277092011	0.3869611450239\\
};

\addplot [color=blue, line width=2pt, draw=none, mark=o, mark options={ blue}, mark size=6pt, forget plot]
  table[row sep=crcr]{%
	0.327005211057955	0.327005211057955\\
};

\end{axis}
\end{tikzpicture}%
		\caption{Symmetric case with  $E_{H_1}=10$ units, $E_{H_2}=10$ units, $p_1=0.25$ and $p_2=0.25$.}
		\label{fig:sym}
	\end{subfigure} 
	\hfill
	\begin{subfigure}{0.48\textwidth}
%
%
\definecolor{mycolor1}{rgb}{0.00000,0.44700,0.74100}%
\definecolor{mycolor2}{rgb}{0.85000,0.32500,0.09800}%
\begin{tikzpicture}[scale=0.6]
\tikzstyle{densely dashed}=          [dash pattern=on 10pt off 3pt]
\tikzstyle{densely dashdotted}=      [dash pattern=on 8pt off 3pt on \the\pgflinewidth off 3pt]

\begin{axis}[%
width=4.521in,
height=3.482in,
at={(0.758in,0.565in)},
scale only axis,
xmin=0,
xmax=1,
xlabel style={font=\color{white!15!black}},
xlabel={$R_1$ (bps)},
ymin=0,
ymax=1,
ylabel style={font=\color{white!15!black}},
ylabel={$R_2$ (bps)},
axis background/.style={fill=white},
xmajorgrids,
ymajorgrids,
legend style={legend cell align=left, align=left, draw=white!15!black}
]
\addplot [color=mycolor1, line width=2pt]
  table[row sep=crcr]{%
	0	0.903677461028802\\
	0.388803789331776	0.903677461028802\\
	0.903677461028802	0.388803789331776\\
	0.903677461028802	0\\
};
\addlegendentry{$\mathcal{T}_{\rm outer}$}

\addplot [color=mycolor2, line width=2pt,dotted]
  table[row sep=crcr]{%
	0	0.411677461028802\\
	0.388803789331776	0.411677461028802\\
	0.411677461028802	0.388803789331776\\
	0.411677461028802	0\\
};
\addlegendentry{$\mathcal{T}_{\rm inner-DB}$}

\addplot [color=red, line width=2pt, draw=none, mark=o, mark options={ red}, mark size=6pt, forget plot]
  table[row sep=crcr]{%
	0.69022028643678	0.297750940716914\\
};

\addplot [color=red, line width=2pt,densely dashdotted]
  table[row sep=crcr]{%
	0.00763893332916099	0.622509577971591\\
	0.276812149602926	0.620867275121238\\
	0.303456476013205	0.616150133654597\\
	0.328011576218479	0.607878736307158\\
	0.355833122145159	0.592810147938802\\
	0.69022028643678	0.297750940716914\\
	0.703352844143848	0.281885821299399\\
	0.710536012804545	0.268730440832015\\
	0.714446697530986	0.257125537109764\\
	0.716347411074435	0.246288384297405\\
	0.716939511071623	0.00677983513652691\\
};
\addlegendentry{Dual-Battery Case}

\addplot [color=blue, densely dashed, line width=2pt]
  table[row sep=crcr]{%
	0.3869611450239	0.267049277092011\\
	0.400946018574988	0.251948992870873\\
	0.413010654852979	0.236254055743027\\
	0.42340185717397	0.21959005802245\\
	0.432282533647377	0.201499658285084\\
	0.43974304703896	0.181366015664567\\
	0.445804226799671	0.158249333148556\\
	0.450412652748096	0.130438424650707\\
	0.453425485816623	0.0936624568518499\\
	0.454543565170295	0\\
};
\addlegendentry{Single-Battery Case}
\addplot [color=blue, densely dashed, line width=2pt, forget plot]
  table[row sep=crcr]{%
	0.267049277092011	0.3869611450239\\
	0.251948992870873	0.400946018574988\\
	0.236254055743027	0.413010654852979\\
	0.21959005802245	0.42340185717397\\
	0.201499658285084	0.432282533647377\\
	0.181366015664567	0.43974304703896\\
	0.158249333148556	0.445804226799671\\
	0.130438424650707	0.450412652748096\\
	0.0936624568518499	0.453425485816623\\
	0	0.454543565170295\\
};
\addplot [color=blue, densely dashed, line width=2pt, forget plot]
  table[row sep=crcr]{%
	0.3869611450239	0.267049277092011\\
	0.267049277092011	0.3869611450239\\
};
\addplot [color=blue, line width=2pt, draw=none, mark=o, mark options={ blue}, mark size=6pt, forget plot]
  table[row sep=crcr]{%
	0.327005211057955	0.327005211057955\\
};

\end{axis}
\end{tikzpicture}%
		\caption{Asymmetric case with $E_{H_1}=10$ units, $E_{H_2}=20$ units, $p_1=0.25$ and $p_2=0.125$.}
		\label{fig:asym}
	\end{subfigure}	
	\caption{Long-term average throughput regions in a two-user MAC with single and dual batteries with $B_1,B_2=20$ units and $\mu_1,\mu_2=2.5$ in symmetric and asymmetric cases. Markers show the maximum sum throughput points.}
	\label{fig:cap_region}
\end{figure}

In both the figures, as expected, we note that the largest average throughput region in the two-battery case is significantly larger than that in the single-battery case. Moreover, we note that the maximum sum-throughput occurs in the single-battery case at $(0.33, 0.33)$ in both symmetric and asymmetric cases. However, in the two-battery case, the maximum sum throughput occurs at $(0.52,0.52)$ and $(0.69, 0.29)$ in the symmetric and asymmetric cases, respectively. Since  both users have the same individual throughputs at the  maximum sum-throughput point in the single-battery case,  it is more fair than the two-battery case wherein the individual data throughputs are significantly different at its maximum sum-throughput point. This is because, the optimal power allocation in the single-battery case depends only on the mean harvested energy values in both users, unlike in the dual-battery case, wherein the optimal power allocations depend on the distributions of the harvested energy in both users.

{In Fig. \ref{fig:NoU}, we study variation of long-term average sum-throughput, the sum of long-term average throughputs at each user, with the number of users, $U$, when the parameters at all the users are identical with $p=0.2$ and $E_H=10$ (see Fig. \ref{fig:NoU1}) and  $p=0.8$ and $E_H=2.5$ (see Fig. \ref{fig:NoU2}). 
From the figures, we note that the sum-throughputs are concave function of the number of users, as in the case of ideal MAC. Moreover, the performance trends are similar to those in the P2P channel. Specifically, as the capacity of the battery, $B$, and $p$ are increased for a fixed average energy harvesting rate, $\mu$, the performance increases in the dual-battery case, but it remains unchanged in the single-battery case.}
\begin{figure}[t]
	\centering
	\begin{subfigure}{0.48\textwidth}
%
%
\definecolor{mycolor1}{rgb}{0.00000,0.44700,0.74100}%
\definecolor{mycolor2}{rgb}{0.85000,0.32500,0.09800}%
\definecolor{mycolor3}{rgb}{0.92900,0.69400,0.12500}%
\definecolor{mycolor4}{rgb}{0.49400,0.18400,0.55600}%
\definecolor{mycolor5}{rgb}{0.46600,0.67400,0.18800}%
\begin{tikzpicture}[scale=0.515]

\begin{axis}[%
width=5.425in,
height=3.152in,
at={(0.91in,0.548in)},
scale only axis,
xmin=1,
xmax=8,
xlabel style={font=\bfseries\color{white!15!black}},
xlabel={Number of Users, $U$},
ymin=0,
ymax=2.1,
ylabel style={font=\bfseries\color{white!15!black}},
ylabel={Average Sum-Throughput, $T$ (bps)},
axis background/.style={fill=white},
xmajorgrids,
ymajorgrids,
legend style={at={(0.97,0.03)}, anchor=south east, legend cell align=left, align=left, draw=white!15!black}
]
\addplot [color=mycolor1, line width=1.5pt]
  table[row sep=crcr]{%
1	0.792481250360579\\
2	1.16096404744368\\
3	1.4036774610288\\
4	1.58496250072116\\
5	1.72971580931865\\
6	1.85021985907055\\
7	1.95344529780426\\
8	2.04373142062517\\
};
\addlegendentry{Upper-Bound}

\addplot [color=mycolor3, line width=1.5pt, mark=diamond, mark options={solid, mycolor3},mark size=6pt]
table[row sep=crcr]{%
	1	0.684417480625694\\
	2	1.00342826049522\\
	3	1.21653388939917\\
	4	1.37678997228721\\
	5	1.50526860751845\\
	6	1.61251053840396\\
	7	1.70455053678267\\
	8	1.7852\\
};
\addlegendentry{Dual-Battery, $B=5E_H$}

\addplot [color=mycolor2, line width=1.5pt, mark=square, mark options={solid, mycolor2},mark size=4pt]
  table[row sep=crcr]{%
1	0.559276869119651\\
2	0.823004655146047\\
3	1.00262289677603\\
4	1.13900056906369\\
5	1.24896915784879\\
6	1.34111604067764\\
7	1.42042007293119\\
8	1.49002715483175\\
};
\addlegendentry{Dual-Battery, $B=E_H$}

\addplot [color=mycolor4, line width=1.5pt, mark=o, mark options={solid, mycolor4},mark size=6pt]
  table[row sep=crcr]{%
1	0.401739399921029\\
2	0.607673693833192\\
3	0.751076202672653\\
4	0.862725262989745\\
5	0.954830735431445\\
6	1.03356565717984\\
7	1.10251782303342\\
8	1.16397132181555\\
};
\addlegendentry{Single-Battery, $B=5E_H$}

\addplot [color=mycolor5, line width=1.5pt, mark=asterisk, mark options={solid, mycolor5},mark size=4pt]
  table[row sep=crcr]{%
1	0.401739399921029\\
2	0.607673693833192\\
3	0.751076202672653\\
4	0.862725262989745\\
5	0.954830735431445\\
6	1.03356565717984\\
7	1.10251782303342\\
8	1.16397132181555\\
};
\addlegendentry{Single-Battery, $B=E_H$}

\end{axis}
\end{tikzpicture}%
		\caption{When $\mu=2$ with  $p=0.2$ and $E_H=10$ at each user.}
		\label{fig:NoU1}
	\end{subfigure} 
	\hfill
	\begin{subfigure}{0.48\textwidth}
%
%
\definecolor{mycolor1}{rgb}{0.00000,0.44700,0.74100}%
\definecolor{mycolor2}{rgb}{0.85000,0.32500,0.09800}%
\definecolor{mycolor3}{rgb}{0.92900,0.69400,0.12500}%
\definecolor{mycolor4}{rgb}{0.49400,0.18400,0.55600}%
\definecolor{mycolor5}{rgb}{0.46600,0.67400,0.18800}%
\begin{tikzpicture}[scale=0.515]

\begin{axis}[%
width=5.425in,
height=3.152in,
at={(0.91in,0.548in)},
scale only axis,
xmin=1,
xmax=8,
xlabel style={font=\bfseries\color{white!15!black}},
xlabel={Number of Users, $U$},
ymin=0,
ymax=2.1,
ylabel style={font=\bfseries\color{white!15!black}},
ylabel={Average Sum-Throughput, $T$ (bps)},
axis background/.style={fill=white},
xmajorgrids,
ymajorgrids,
legend style={at={(0.97,0.03)}, anchor=south east, legend cell align=left, align=left, draw=white!15!black}
]
\addplot [color=mycolor1, line width=1.5pt]
  table[row sep=crcr]{%
1	0.792481250360579\\
2	1.16096404744368\\
3	1.4036774610288\\
4	1.58496250072116\\
5	1.72971580931865\\
6	1.85021985907055\\
7	1.95344529780426\\
8	2.04373142062517\\
};
\addlegendentry{Upper-Bound}

\\
\addplot [color=mycolor3, line width=1.5pt, mark=diamond, mark options={solid, mycolor3},mark size=6pt]
table[row sep=crcr]{%
	1	0.743378205583269\\
	2	1.08735472140327\\
	3	1.31635224776032\\
	4	1.48827199684684\\
	5	1.62596461974635\\
	6	1.74082161415225\\
	7	1.83935073960073\\
	8	1.92562170216093\\
};
\addlegendentry{Dual-Battery, $B=5E_H$}

\addplot [color=mycolor2, line width=1.5pt, mark=square, mark options={solid, mycolor2},mark size=4pt]
  table[row sep=crcr]{%
1	0.722941965521173\\
2	1.03398498811952\\
3	1.23498513516171\\
4	1.38377264695178\\
5	1.50195499201426\\
6	1.59999999970859\\
7	1.68378134564743\\
8	1.75692696809232\\
};
\addlegendentry{Dual-Battery, $B=E_H$}

\addplot [color=mycolor4, line width=1.5pt, mark=o, mark options={solid, mycolor4},mark size=6pt]
  table[row sep=crcr]{%
1	0.401739399921029\\
2	0.607673693833192\\
3	0.751076202672653\\
4	0.862725262989745\\
5	0.954830735431445\\
6	1.03356565717984\\
7	1.10251782303342\\
8	1.16397132181555\\
};
\addlegendentry{Single-Battery, $B=5E_H$}

\addplot [color=mycolor5, line width=1.5pt, mark=asterisk, mark options={solid, mycolor5},mark size=4pt]
  table[row sep=crcr]{%
1	0.401739399921029\\
2	0.607673693833192\\
3	0.751076202672653\\
4	0.862725262989745\\
5	0.954830735431445\\
6	1.03356565717984\\
7	1.10251782303342\\
8	1.16397132181555\\
};
\addlegendentry{Single-Battery, $B=E_H$}

\end{axis}
\end{tikzpicture}%
		\caption{When $\mu=2$ with  $p=0.8$ and $E_H=2.5$ at each user.}
		\label{fig:NoU2}
	\end{subfigure}	
	\caption{Variation of the optimal long-term average sum-throughput with the number of users, $U$.}
	\label{fig:NoU}
\end{figure}
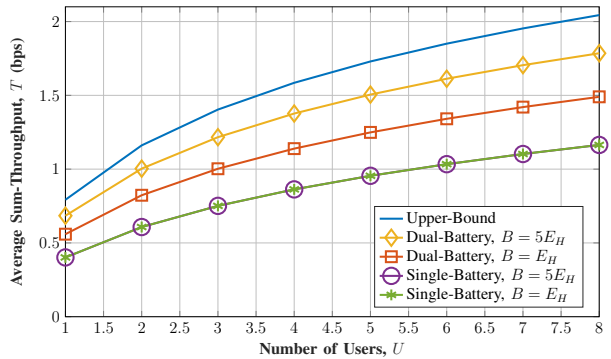
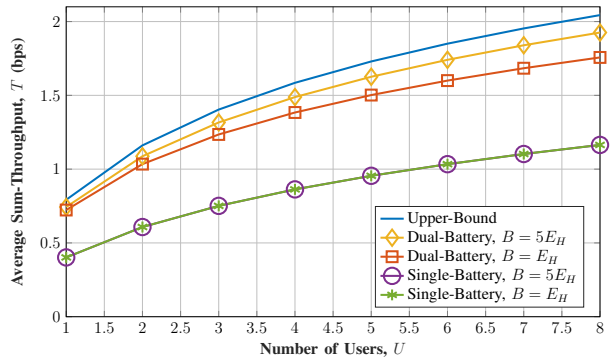

\section{Conclusions}\label{sec:conclusions}
In this work, we optimized the long-term average throughputs and throughput regions in a P2P and multiple-access communication systems under the single-battery and dual-battery cases. In order to avoid capacity degradation of the batteries, we applied the cycle constraint, which says that  a battery must be charged (discharged) only after it is fully discharged (charged). We also applied the  half-duplex constraint for the charging and discharging processes. 
We obtained a closed form expression for the optimal power allocation policy in the P2P channel for the single-battery case. For the dual-battery case,
we first obtained optimal online policy via dynamic programming, followed by non-adaptive and adaptive policies, which do not adapt and adapt power allocations based on battery states, respectively. We then obtained throughput regions in a MAC based on the policies proposed for the P2P channel case. 
We numerically found that the gap between the performance of our suboptimal non-adaptive policy and  the unconstrained optimal performance decreases with the battery capacity faster than the inverse of the square root of the battery capacity, in the dual-battery case. Further,  the optimal throughput in the dual-battery case is significantly higher than that in the single-battery case. 
We also noted that the largest throughput region of the proposed multiple access policy in the single-battery case is contained within that of the dual-battery case. 

\section*{Appendix}
\subsection{Proof of Theorem \ref{thm:SB}}
Obtaining the first order differential of the objective function in \eqref{eq:SB_power} and equating it to zero, the optimal $\tilde{P}$ must satisfy, 
\begin{align}\label{eq:lambertW_sol}
\mu+\tilde{P}-(1+\tilde{P})\log(1+\tilde{P})=0.
\end{align}
For all $\mu\geq 1$, \eqref{eq:lambertW_sol} has only one real solution given by $({\mu-1})/({W_0(\exp(-1)(\mu - 1))})-1$, where $W_0(\cdot)$ is the principal branch of the Lambert W function \cite{lambertW}. For $0<\mu<1$, \eqref{eq:lambertW_sol} has two real solutions, namely, $({\mu-1})/({W_0(\exp(-1)(\mu - 1))})-1$ and $({\mu-1})/({W_{-1}(\exp(-1)(\mu - 1))})-1$, where $W_{-1}$ is the lower branch of the Lambert W function. Since the former solution is always larger than the latter solution \cite{lambertW}, we can always choose the larger value and obtain a higher rate without violating any constraints. Hence, the optimal transmit power is given by \eqref{eq:SB opt_tx_power}. Now, noting that $\exp(W_0(x))=x/W_0(x)$, we get   $({\mu-1})/({W_0(\exp(-1)(\mu - 1))})-1=\exp(1)\exp\left(W_0\left(\exp(-1)(\mu-1)\right)\right)-1$. 
Now, we can obtain $\tilde{T}_{\rm SB}$ by substitution of the optimal $\tilde{P}$ in the objective  function.

\subsection{Proof of Theorem  \ref{thm:opt_sol}}
The Lagrangian of \eqref{eq:P1} is given by
\begin{align}
L=&-\sum_{i=1}^{\infty}\bar{F}_{i-1}(r,p)\frac{p}{2r}\log(1+\tilde{P}_i)+\lambda\left(\sum_{i=1}^{\infty}\tilde{P}_i-B\right)-\sum_{i=1}^{\infty}\mu_i\tilde{P}_i, 
\end{align}
where $\lambda,\mu_i\geq 0$ are Lagrange multipliers. 
Differentiating $L$ with respect to $\tilde{P}_i$, we have the following stationarity condition. 
\begin{align}\label{eq:stat}
\frac{\partial L}{\partial \tilde{P}_i}=-\frac{ p\bar{F}_{i-1}(r,p)}{2r\ln 2(1+\tilde{P}_i)}+\lambda-\mu_i=0,\quad i=1,2,\ldots
\end{align}
Further, the complementary slackness conditions, $\lambda\left(\sum_{i=1}^{\infty}\tilde{P}_i-B\right)=0$ and $\mu_i\tilde{P}_i=0$, must be satisfied at the optimal solution. 
Hence, when $\tilde{P}_i>0$, we must have, $\mu_i=0$. From \eqref{eq:stat}, we get, 
\begin{align}\label{eq:power}
\tilde{P}_i=\frac{ p\bar{F}_{i-1}(r,p)}{2r\lambda \ln 2}-1,\quad i=1,2,\ldots, 
\end{align}
where the optimal $\lambda$ can be found from  \eqref{eq:P2_c2}, as shown below.
\paragraph*{Solving for the optimal $\lambda$}
Due to the total energy constraint, we  have, 
\begin{align}\label{eq:sumpower}
B&=\sum_{i=1}^{N}\tilde{P}_i
=\frac{p}{2r\lambda \ln 2}\sum_{i=1}^{N}\bar{F}_{i-1}(r,p)-N\implies \lambda^*=\frac{p\sum_{i=1}^{N}\bar{F}_{i-1}(r,p)}{2r\ln 2(B+N)}, 
\end{align}
where $N$ is the last slot where $\tilde{P}_i>0$. From \eqref{eq:power} and the expression for $\lambda^*$, we find that $N$ is the largest $n$ that satisfies
\begin{align}
{\sum_{i=1}^{n}\bar{F}_{i-1}(r,p)}\leq  ({B+n})F_n(r,p).
\end{align}
Hence, from \eqref{eq:power}, 
\begin{align}\label{eq:tx_power}
\tilde{P}_i=\frac{(B+N)\bar{F}_{i-1}(r,p)}{\sum_{i=1}^{N}\bar{F}_{i-1}(r,p)}-1,\quad i=1,2,\ldots, N
\end{align}
The transmit power in the rest of the slots is zero. Further,  noting that $\bar{F}_{i-1}(r,p)=1$ for $i=1,\ldots,r$, we obtain \eqref{eq:opt_power}.  We note that due to \eqref{eq:sumpower}, the sum of the transmit powers in \eqref{eq:tx_power} never exceeds the battery capacity.

\subsection{Proof of Theorem \ref{thm:approx}}
We now prove the theorem along the lines in Proposition 3 of \cite{Shaviv-Ozgur}. 
\begin{align}\label{eq:additive}
T_{\rm SNA}&=\frac{\mathbb{E}\left[\sum_{i=1}^{L}\frac{1}{2}\log(1+\tilde{P}_i)\right]  }{\mathbb{E}(L)}=\frac{\mathbb{E}\left[\sum_{i=1}^{L}\frac{1}{2}\log(1+\frac{Bp}{r}\sum_{n=i}^{\infty} q_n)\right]  }{\mathbb{E}(L)},\nonumber\\
&\stackrel{\text{a}}{\geq}\frac{\mathbb{E}\left[\sum_{i=1}^{L}\frac{1}{2}\log(1+\frac{Bp}{r})+\sum_{i=1}^{L}\frac{1}{2}\log(\sum_{n=i}^{\infty}q_n)\right]  }{\mathbb{E}(L)},\nonumber\\
&=\frac{\mathbb{E}\left[L\frac{1}{2}\log(1+\frac{Bp}{r})\right]  }{\mathbb{E}(L)}+\frac{\mathbb{E}\left[\sum_{i=1}^{L}\frac{1}{2}\log\left(\sum_{n=i}^{\infty} q_n \right)\right]}{\mathbb{E}(L)},\nonumber\\
&=\frac{1}{2}\log(1+\frac{Bp}{r})+\frac{p}{r}\sum_{i=1}^{\infty}\left(\sum_{n=i}^{\infty}q_n\right)\frac{1}{2}\log\left( \sum_{n=i}^{\infty} q_n \right)\stackrel{\text{}}{\geq} \frac{1}{2}\log(1+\mu)-G(r), 
\end{align}
where (a) is because $\log(1+\alpha x)\geq \log(1+x)+\log(\alpha)$ for $0\leq \alpha\leq 1$ and we note $0\leq \sum_{n=i}^{\infty}q_n\leq 1$, and  we define  $G(r)\triangleq\max_p (-p/{r})\sum_{i=1}^{\infty}\left(\sum_{n=i}^{\infty}q_n\right)\frac{1}{2}\log\left( \sum_{n=i}^{\infty} q_n \right)$.

\bibliographystyle{ieeetran}
\bibliography{twireless_long}

\end{document}